\newcommand{\proba}{\mathbb{P}} 
\newcommand{\indic}{\mathds{1}}
\newcommand{\E}{\mathbb{E}} 
\newcommand{\Cov}{\text{Cov}} 
\newcommand{\Var}{\text{Var}} 
\theoremstyle{plain}
\newtheorem{thm}{Theorem}
\newtheorem{pro}{Proposition}
\newtheorem{lem}{Lemma}
\newtheorem{defn}{Definition}
\providecommand{\keywords}[1]{\textbf{\textit{Keywords --}} #1}
\title{Forecasting with fractional Brownian motion: \\ a financial perspective}
\author{Matthieu Garcin\footnote{ Léonard de Vinci Pôle Universitaire, Research center, 92916 Paris La Défense, France, matthieu.garcin@m4x.org. The author is grateful to Martino Grasselli, Chafic Merhy, and Konstantinos Tsianos for valuable discussions and comments.}}
\date{\today}
\begin{document}

\maketitle

\begin{abstract}
The fractional Brownian motion (fBm) extends the standard Brownian motion by introducing some dependence between non-overlapping increments. Consequently, if one considers for example that log-prices follow an fBm, one can exploit the non-Markovian nature of the fBm to forecast future states of the process and make statistical arbitrages. We provide new insights into forecasting an fBm, by proposing theoretical formulas for accuracy metrics relevant to a systematic trader, from the hit ratio to the expected gain and risk of a simple strategy. In addition, we answer some key questions about optimizing trading strategies in the fBm framework: Which lagged increments of the fBm, observed in discrete time, are to be considered? If the predicted increment is close to zero, up to which threshold is it more profitable not to invest? We also propose empirical applications on high-frequency FX rates, as well as on realized volatility series, exploring the rough volatility concept in a forecasting perspective.
\end{abstract}

\keywords{fractional Brownian motion, Hurst exponent, systematic trading, rough volatility, foreign-exchange rates}

\section{Introduction}

An fBm is a non-Markovian process. A question thus naturally arises when considering the use of this model in finance for describing the dynamic of prices or log-prices: Does the fBm induce pure arbitrages or not? This question has led to numerous articles~\cite{BSV}. In particular, it has been shown that arbitrage opportunities exist when trading in continuous time is allowed~\cite{Rogers}. But the answer is different when one trades in discrete time, even at a very high frequency~\cite{Cheridito}. From the perspective of a financial market practitioner, beyond the strong mathematical interest of the question, the distinction is not so clear between continuous time and discrete time with arbitrary high frequency. Therefore, like others before, we believe that this debate about the existence of arbitrages implied by this model is mainly motivated by theoretical convenience, what is not enough to exclude the fBm as a model of log-prices~\cite{Cont}. 

A more recent literature deals with the use of fBm in statistical arbitrage~\cite{GNR,GMR}. Exploiting the autocovariance structure of the fBm indeed makes it possible to forecast future states of the process~\cite{NP}. This simple idea has been used since in algorithmic trading to build systematic strategies. Such strategies do not lead to a certain gain as do the pure arbitrages previously mentioned, but they are profitable in average. All the contributions on this subject highlight the asymmetry of the performance of these strategies with respect to $H=1/2$, where $H$ is the Hurst exponent. 

Asymmetry with respect to $H=1/2$ is also underlined in the econophysics literature. Indeed, when $H>1/2$, the increments of the fBm are positively correlated and the process also has a long memory. By contrast, when $H<1/2$, non-overlapping increments are negatively correlated and there is no long memory, in the sense that the autocovariance function decreases exponentially. Many researchers equate the notion of market efficiency with the property of long memory, leading them to consider that the market is efficient if and only if $H>1/2$~\cite{DiMatteo2003,CT,DiMatteo2005,ARAR,ECOJ,GSP}. We indeed aknowledge that when $H<1/2$, the performance of the fBm-based predictor does not always lead to satisfying empirical results~\cite{Garcin2017}. But this practical limitation does not come from a particular property of the fBm itself. It comes instead from the fact that the model is not well specified or from the related difficulties to estimate the parameters properly, in a nutshell from model risk~\cite{GarcinLamperti,GarcinEstimLamp}. Besides the identification of market efficiency to long-range dependence, another branch of the econophysics literature considers that the market is efficient for $H=1/2$ and that its inefficiency gradually increases as $H$ gets away from $1/2$~\cite{KV13,KV16,Kristoufek,BP,AG}. Indeed, predictions do not rely on the long-range dependence property but, instead, on the autocovariance of the process, which is different from zero as soon as $H\neq 1/2$.

While the fBm is widespread in econophysics and more recently in mainstream quantitative finance, not everything has been said yet on forecasting the fBm in a financial perspective. First, the theoretical evaluation of the quality of the forecast invoked in the literature is rarely very appropriate for systematic traders. For example, the mean squared error (MSE), put forward in several papers~\cite{NP,GJR}, albeit statistically relevant, is not related to the performance and risk of a trading strategy. In this perspective, other metrics have to be applied in the fBm setting. Moreover, while many articles focus on the fBm in continuous time~\cite{GNR,GMR}, the reality of trading, which is affected by liquidity frictions, is in discrete time, whether for the observation of the price process or for the instants at which one is able to trade. And this limitation has some consequences. For example, determining which lagged price returns should be used as input of the predictor is overriding in order to optimize an fBm-based forecast and related systematic trading strategies. 

We answer these questions in the present paper. We introduce several accuracy metrics for covariance-based predictors: a hit ratio, an average gain, a risk defined as a lower semi-deviation, and the resulting risk-adjusted performance. The risk measure we have chosen differentiates our work from the traditional mean-variance framework~\cite{GNR,GMR}. While the variance is a widespread deviation risk measure, we indeed think that a lower semi-deviation more appropriately depicts downside risk. We provide theoretical expressions for all the aforementioned metrics, using the fBm assumption. All these expressions depict the link between the Hurst exponent $H$ and the forecasting ability of the fBm: the closer $H$ is to $1/2$, the worst is the quality of the forecast, and, asymmetrically, the forecast is better for $H>1/2$ than for $1-H$. We also use the theoretical expressions mentioned above to build tools which are useful in optimizing trading strategies. We do this in two directions. First, we find numerically a general expression for the duration of the optimal time lags of price returns used as input variables of the predictor. Second, we define an optimal threshold under which we consider that the forecast is too close to zero for an investment to positively impact the risk-adjusted profit. This last idea finally leads to reducing the trading frequency and thus to take into account liquidity constraints.

We also propose two empirical applications highlighting the relevance of predictions based on the fBm in finance. In particular, we maximize an ex-ante risk-adjusted performance of a trading strategy on FX rates, where log-prices follow an fBm. We also investigate another application of the fBm in finance than the sole log-price process and for which a performing forecast method is useful for trading desks. We indeed study series of realized volatilities, exploring the rough volatility and how one can optimize the forecast of volatility in this context, focusing on hit ratios.

The rest of the paper is organized as follows. In Section~\ref{sec:prelim}, we recall some basic and useful properties of the fBm. Section~\ref{sec:forecast} introduces the art of forecasting an fBm, namely the covariance-based predictor, its accuracy measured by a hit ratio, and the optimal selection of its input. In Section~\ref{sec:statarb}, we focus on statistical arbitrage, providing both a simple systematic trading strategy and related accuracy metrics. In Section~\ref{sec:empiric}, we present empirical results on high-frequency FX rates and on daily financial series, including realized volatility. Section~\ref{sec:conclu} concludes.

\section{Model description: preliminaries on the fBm}\label{sec:prelim}

The fBm, introduced by Mandelbort and van Ness in 1968, can follow several equivalent definitions~\cite{MvN}. Among them, we can cite the integral-based definition.

\begin{defn}\label{def:fBmInt}
An fBm of Hurst exponent $H\in(0,1)$ and volatility parameter $\sigma>0$ is a stochastic process $X_t$ such that, $\forall t\in\mathbb R$,
$$X_t=\frac{\sigma}{\Gamma\left(H+\frac{1}{2}\right)}\int_{-\infty}^{\infty}{\left((t-s)_+^{H-1/2}-(-s)_+^{H-1/2}\right)dW_s},$$
where $W_s$ is a standard Brownian motion.
\end{defn}

In Definition~\ref{def:fBmInt}, the fBm is to be seen as a weighted average of a Gaussian white noise. The weights are defined by the kernel $(s,t)\mapsto\left((t-s)_+^{H-1/2}-(-s)_+^{H-1/2}\right)$, whose right part, $-(-s)_+^{H-1/2}$, is simply intended to make the process $X_t$ equal to zero when $t=0$, which is a feature also stated in other definitions of the fBm. One major motivation for introducing the fBm was to define a process with other scaling properties than the standard Brownian motion. This alternative scaling clearly appears in Definition~\ref{def:fBmVar}, in the expression of the variance of an increment of the process $X_t$, but it is not so obvious in the integral-based definition. The idea of Mandelbrot and van Ness to obtain a specific scaling was to use fractional calculus. Indeed, Definition~\ref{def:fBmInt} corresponds to the fractional derivative (respectively integral) of order $1/2-H$ (resp. $H-1/2$) of a standard Brownian motion, if $H<1/2$ (resp. $H>1/2$), the case $H=1/2$ corresponding to the standard Brownian motion itself. The scaling feature characterized by a given Hurst exponent is thus to be related to a specific autocovariance of the increments of the process in the fBm model.

Since we will work in discrete time in all the paper, we introduce another definition of the fBm, which is equivalent to Definition~\ref{def:fBmInt}.

\begin{defn}\label{def:fBmVar}
An fBm of Hurst exponent $H\in(0,1)$ and volatility parameter $\sigma>0$ is a stochastic Gaussian process $X_t$ such that $X_0=0$, $\E\{X_t\}=0$, and 
$$\E\{(X_t-X_s)^2\}=\sigma^2|t-s|^{2H}$$
for all $s,t\in\mathbb R$.
\end{defn}

Thanks to Definition~\ref{def:fBmVar}, a particular property of scaling, or selfsimilarity, clearly appears for the fBm. This is the purpose of Proposition~\ref{pro:selfsim}.

\begin{pro}\label{pro:selfsim}
Let $X_t$ be an fBm of Hurst exponent $H\in(0,1)$. Then, the process $X_t$ is statistically $H$-selfsimilar, that is to say, $\forall c>0$, the random variables $X_{ct}$ and $c^H X_{t}$, for a given $t\in\mathbb R$, follow the same distribution.
\end{pro}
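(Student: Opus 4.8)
The plan is to verify that $c^H X_t$, viewed as a process in $t$ (or just as a single random variable for fixed $t$, as the statement requires), has exactly the distributional characterization given by Definition~\ref{def:fBmVar}, and then invoke the fact that Definition~\ref{def:fBmVar} determines the law of the fBm uniquely. Since an fBm is a centered Gaussian process vanishing at the origin, its law is entirely specified by its covariance function; equivalently, by the variogram $\E\{(X_t-X_s)^2\}=\sigma^2|t-s|^{2H}$ together with $X_0=0$ and $\E\{X_t\}=0$. So it suffices to check these three properties for the rescaled process $Y_t:=c^{-H}X_{ct}$ (it is slightly cleaner to show $Y_t$ is again an fBm of the same parameters, which immediately gives $X_{ct}\overset{d}{=}c^H X_t$).

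First I would note that $Y_t=c^{-H}X_{ct}$ is Gaussian, being a deterministic multiple of a Gaussian random variable (more generally, a linear image of the Gaussian process $X$). Next, $Y_0=c^{-H}X_0=0$ and $\E\{Y_t\}=c^{-H}\E\{X_{ct}\}=0$. Finally I would compute the variogram of $Y$: for all $s,t\in\mathbb R$,
$$\E\{(Y_t-Y_s)^2\}=c^{-2H}\,\E\{(X_{ct}-X_{cs})^2\}=c^{-2H}\sigma^2|ct-cs|^{2H}=c^{-2H}\sigma^2 c^{2H}|t-s|^{2H}=\sigma^2|t-s|^{2H},$$
where the key step is the homogeneity $|ct-cs|^{2H}=c^{2H}|t-s|^{2H}$, valid because $c>0$. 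Hence $Y$ satisfies all the defining conditions of Definition~\ref{def:fBmVar} with the same $H$ and $\sigma$, so $Y_t\overset{d}{=}X_t$, i.e.\ $c^{-H}X_{ct}$ and $X_t$ share the same distribution, which is the claimed $H$-selfsimilarity after multiplying by $c^H$.

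There is essentially no obstacle here; the only point requiring a word of care is the appeal to uniqueness of the law. One should remark that two centered Gaussian processes with the same covariance function have the same finite-dimensional distributions, hence the same law, and that the covariance is recovered from the variogram via $\Cov(X_t,X_s)=\tfrac12(\E\{X_t^2\}+\E\{X_s^2\}-\E\{(X_t-X_s)^2\})=\tfrac{\sigma^2}{2}(|t|^{2H}+|s|^{2H}-|t-s|^{2H})$ once $X_0=0$ is used. Since the statement only asserts equality in distribution of the one-dimensional marginals $X_{ct}$ and $c^H X_t$, even this is more than needed: matching $\E\{X_{ct}\}=0=\E\{c^H X_t\}$ and $\Var(X_{ct})=\sigma^2|ct|^{2H}=c^{2H}\sigma^2|t|^{2H}=\Var(c^H X_t)$ for the two (univariate) Gaussians already closes the argument. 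I would present the process-level version for robustness but point out the marginal version suffices for the precise claim.
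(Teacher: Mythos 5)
Your proof is correct. Its closing remark --- that for the stated claim one only needs to match the mean (zero) and the variance $\Var(X_{ct})=\sigma^2|ct|^{2H}=c^{2H}\sigma^2|t|^{2H}=\Var(c^H X_t)$ of two univariate Gaussians --- is exactly the argument the paper gives, so in substance you take the same route. The bulk of your write-up proves something stronger: that the rescaled process $Y_t=c^{-H}X_{ct}$ is again an fBm with the same parameters $H$ and $\sigma$, hence that $X$ is self-similar as a process (equality of all finite-dimensional distributions), using the standard fact that a centered Gaussian process vanishing at the origin is determined in law by its variogram. That strengthening is valid and arguably the more natural formulation of $H$-selfsimilarity, but it is not needed for the proposition as stated, which only concerns the one-dimensional marginals; the paper accordingly stops at the two-line marginal check. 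No gaps either way.
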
 

\begin{proof}
According to Definition~\ref{def:fBmVar}, the two variables $X_{ct}$ and $c^H X_{t}$ are Gaussian, have both a mean equal to zero and a variance equal to $\sigma^2c^{2H}|t|^{2H}$. They thus follow exactly the same distribution. 
\end{proof}

This scaling property of the fBm also leads to a specific autocovariance of the process and of its increments, as exposed in Proposition~\ref{pro:cov_fBm}.

\begin{pro}\label{pro:cov_fBm}
Let $X_t$ be an fBm of Hurst exponent $H\in(0,1)$ and volatility parameter $\sigma>0$. Then, the covariance of $X_t$ and $X_s$ is
\begin{equation}\label{eq:covMBF}
\E\{X_tX_s\}=\frac{\sigma^2}{2}(|t|^{2H}+|s|^{2H}-|t-s|^{2H})
\end{equation}
and the covariance between increments $X_t-X_s$ and $X_v-X_u$ is
\begin{equation}\label{eq:covMBF_incr}
\E\{(X_t-X_s)(X_v-X_u)\}=\frac{\sigma^2}{2}(|u-t|^{2H}+|v-s|^{2H}-|v-t|^{2H}-|u-s|^{2H}).
\end{equation}
\end{pro}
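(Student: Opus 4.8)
The plan is to derive both covariance formulas directly from Definition~\ref{def:fBmVar}, using only the facts that $X_t$ is a centered Gaussian process with $X_0=0$ and $\E\{(X_t-X_s)^2\}=\sigma^2|t-s|^{2H}$. First I would establish \eqref{eq:covMBF}. The key identity is the polarization-type expansion $\E\{X_tX_s\}=\tfrac12\left(\E\{X_t^2\}+\E\{X_s^2\}-\E\{(X_t-X_s)^2\}\right)$, which follows from expanding $(X_t-X_s)^2=X_t^2-2X_tX_s+X_s^2$ and taking expectations. It then remains to note that, since $X_0=0$, we have $\E\{X_t^2\}=\E\{(X_t-X_0)^2\}=\sigma^2|t|^{2H}$ and similarly $\E\{X_s^2\}=\sigma^2|s|^{2H}$; substituting these and $\E\{(X_t-X_s)^2\}=\sigma^2|t-s|^{2H}$ gives \eqref{eq:covMBF}.

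For \eqref{eq:covMBF_incr}, I would expand the product of increments by bilinearity of the covariance:
\begin{equation*}
\E\{(X_t-X_s)(X_v-X_u)\}=\E\{X_tX_v\}-\E\{X_tX_u\}-\E\{X_sX_v\}+\E\{X_sX_u\}.
\end{equation*}
Then I would substitute \eqref{eq:covMBF} for each of the four terms. Writing out $\E\{X_tX_v\}=\tfrac{\sigma^2}{2}(|t|^{2H}+|v|^{2H}-|t-v|^{2H})$, and likewise for the others, the pure terms $|t|^{2H},|s|^{2H},|u|^{2H},|v|^{2H}$ each appear once with a $+$ sign and once with a $-$ sign in the alternating sum, so they cancel. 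What survives is $\tfrac{\sigma^2}{2}\left(-|t-v|^{2H}+|t-u|^{2H}+|s-v|^{2H}-|s-u|^{2H}\right)$, which, after using $|a-b|=|b-a|$ to match the paper's ordering of arguments, is exactly \eqref{eq:covMBF_incr}.

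I do not expect any genuine obstacle here; the proposition is a routine consequence of Definition~\ref{def:fBmVar}. The only point requiring a little care is the bookkeeping in the second part: one must track the signs of the eight pure power terms carefully to see that they all cancel in pairs, and then reconcile the notation $|u-t|$ versus $|t-u|$ etc.\ with the four surviving cross terms. A clean way to present this is to note that the increment $X_v-X_u$ seen as a function of the pair $(u,v)$ is a linear functional, so covariance is bilinear, and then the cancellation is automatic; alternatively one could observe that \eqref{eq:covMBF} is itself the special case $s=0$, $u=0$ of \eqref{eq:covMBF_incr} (recalling $X_0=0$), so it suffices to prove \eqref{eq:covMBF_incr} and deduce \eqref{eq:covMBF}, but proving \eqref{eq:covMBF} first and bootstrapping to \eqref{eq:covMBF_incr} is the more transparent order.
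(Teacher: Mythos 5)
Your proof is correct and follows essentially the same route as the paper: the polarization identity combined with $\E\{X_t^2\}=\E\{(X_t-X_0)^2\}=\sigma^2|t|^{2H}$ for \eqref{eq:covMBF}, then bilinear expansion of the increment product and substitution of \eqref{eq:covMBF} four times, with the pure power terms cancelling, for \eqref{eq:covMBF_incr}. Nothing is missing; your extra remarks on bookkeeping and the alternative order of deduction are fine but not needed.
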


\begin{proof}
Using Definition~\ref{def:fBmVar} and in particular its consequence that $\E\{X_t^2\}=\E\{(X_t-X_0)^2\}=\sigma^2|t|^{2H}$, the proof of Proposition\ref{pro:cov_fBm} is straightforward:
$$\begin{array}{ccl}
\E\{X_tX_s\} & = & -\frac{1}{2}\left(\E\{(X_t-X_s)^2\}-\E\{X_t^2\}-\E\{X_s^2\}\right) \\
 & = & -\frac{1}{2}\left(\sigma^2|t-s|^{2H}-\sigma^2|t|^{2H}-\sigma^2|s|^{2H}\right) \\
 & = & \frac{\sigma^2}{2}(|t|^{2H}+|s|^{2H}-|t-s|^{2H})
\end{array}$$
and
$$\begin{array}{ccl}
\E\{(X_t-X_s)(X_v-X_u)\} & = & \E\{X_tX_v\}-\E\{X_tX_u\}-\E\{X_sX_v\}+\E\{X_sX_u\} \\
& = & \frac{\sigma^2}{2}(|u-t|^{2H}+|v-s|^{2H}-|v-t|^{2H}-|u-s|^{2H}).
\end{array}$$
\end{proof}

From Proposition~\ref{pro:cov_fBm}, one sees that the fBm is not a stationary process but its increments are stationary, exactly like the standard Brownian motion. This means that, for making forecasts, we will more easily work with increments than with the process itself.

In all what follows, the process $X_t$ is an fBm of Hurst exponent $H$ and volatility parameter $\sigma>0$. In the empirical part of the paper, $X_t$ will either depict a log-price or the volatility process of a price. We thus cover the two main applications of the fBm in finance. The first one has been put forward by econophysicists and the second one has known a recent interest in mathematical finance with the rise of the literature about rough volatility. For simplification, we will sometimes evoke log-prices in the theoretical developments, but the approaches can also often be applied to volatilities and to any other dynamic described by an fBm, in particular in Section~\ref{sec:forecast}. Besides, even though volatility is not tradable, its forecast value is used in many trading algorithms, making the forecast of rough volatility a relevant challenge in finance. However, Section~\ref{sec:seuilOptim}, in which we determine the profit and risk of a systematic strategy, is only appropriate for log-prices.

\section{Forecasting an fBm}\label{sec:forecast}

In this section, we present the standard formula for the forecast of the process in discrete time along with various accuracy metrics.

\subsection{Covariance-based forecast}

We want to predict the value of the process $X_t$ for a time horizon $h$, that is $X_{t+h}$, conditionally to observations at some times $I=\{t_1,...,t_n\}$, such that $\forall s\in I,s\leq t$, where $t$ is the current time. Writing $Y=(X_{t_1} \ ... \ X_{t_n})^T$, the forecast minimizing the MSE is $\hat X_{t+h|Y}=\E\{X_{t+h}|Y\}$. We recall that $X_t$ is an fBm. For this reason, we can easily determine the covariance between $X_s$ and $X_t$ at any times $s,t\in\mathbb R$, thanks to equation~\eqref{eq:covMBF}, and we can even build covariance matrices for $X_{t+h}$ and the vector $Y$. This makes it possible to express explicitly $\hat X_{t+h|Y}$ as well as its MSE. This is the purpose of Proposition~\ref{pro:NP}, for which we do not detail the proof since it is a direct consequence of the Gaussian conditioning theorem~\cite{NP}.

\begin{pro}\label{pro:NP}
Let $X_t$ be an fBm, $h>0$, $Y=[X_{t_1} \ ... \ X_{t_n}]^T$, with $\forall i<j$, $t_i\neq 0$ and $t_i<t_{j}$. The estimator of $X_{t+h}$ minimizing the MSE conditionally to $Y$ is
$$\hat X_{t+h|Y}=\E\{X_{t+h}|Y\}=\Sigma_{XY}\Sigma_Y^{-1}Y$$
and the corresponding MSE is
$$\E\{(\hat X_{t+h|Y}-X_{t+h})^2\}=\Sigma_X-\Sigma_{XY}\Sigma_Y^{-1}\Sigma_{XY}^T,$$
where $\Sigma_Y=\E\{YY^T\}$, $\Sigma_X=\E\{X_{t+h}^2\}$, and $\Sigma_{XY}=\E\{X_{t+h}Y^T\}$ can be explicitly expressed thanks to equation~\eqref{eq:covMBF}.
\end{pro}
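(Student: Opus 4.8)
The plan is to derive this as a direct application of the classical Gaussian conditioning formula, which is precisely the route the paper announces it will not detail.

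First I would note that, since $X_t$ is an fBm, the joint vector $(X_{t+h},Y^T)^T = (X_{t+h}, X_{t_1},\dots,X_{t_n})^T$ is a centered Gaussian random vector in $\mathbb{R}^{n+1}$: Gaussianity is immediate from Definition~\ref{def:fBmVar} (any finite linear combination of values of a Gaussian process is Gaussian), and the mean is zero because $\E\{X_s\}=0$ for all $s$. Its covariance structure is completely determined by equation~\eqref{eq:covMBF}: the scalar $\Sigma_X=\E\{X_{t+h}^2\}=\sigma^2|t+h|^{2H}$, the $1\times n$ block $\Sigma_{XY}=\E\{X_{t+h}Y^T\}$ with entries $\frac{\sigma^2}{2}(|t+h|^{2H}+|t_i|^{2H}-|t+h-t_i|^{2H})$, and the $n\times n$ block $\Sigma_Y=\E\{YY^T\}$ with entries $\frac{\sigma^2}{2}(|t_i|^{2H}+|t_j|^{2H}-|t_i-t_j|^{2H})$.

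Next I would invoke the Gaussian conditioning theorem: if $\binom{U}{V}$ is jointly Gaussian and centered with covariance blocks $\Sigma_U$, $\Sigma_{UV}$, $\Sigma_V$ and $\Sigma_V$ is invertible, then the conditional law of $U$ given $V$ is Gaussian with mean $\Sigma_{UV}\Sigma_V^{-1}V$ and covariance $\Sigma_U-\Sigma_{UV}\Sigma_V^{-1}\Sigma_{UV}^T$. Applying this with $U=X_{t+h}$ and $V=Y$ gives $\E\{X_{t+h}\mid Y\}=\Sigma_{XY}\Sigma_Y^{-1}Y$. That $\E\{X_{t+h}\mid Y\}$ is the MSE-minimizing estimator among all (measurable) functions of $Y$ is the standard characterization of conditional expectation as the $L^2$-projection. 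Finally, for the MSE itself, I would write $\E\{(\hat X_{t+h|Y}-X_{t+h})^2\}=\E\{\E\{(\hat X_{t+h|Y}-X_{t+h})^2\mid Y\}\}=\E\{\Var(X_{t+h}\mid Y)\}$, and since the conditional variance $\Sigma_X-\Sigma_{XY}\Sigma_Y^{-1}\Sigma_{XY}^T$ is deterministic (a feature of the Gaussian case), taking the outer expectation leaves it unchanged, yielding the claimed formula.

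The only genuine issue to address is the invertibility of $\Sigma_Y$, which is needed for the formulas to make sense. Here the hypotheses $t_i\neq 0$ and $t_i<t_j$ for $i<j$ (so the $t_i$ are distinct and nonzero) are exactly what rules out degeneracies: $\Sigma_Y$ is the covariance matrix of $(X_{t_1},\dots,X_{t_n})$, and a vanishing linear combination $\sum_i a_i X_{t_i}=0$ a.s. would contradict the known positive-definiteness of the fBm covariance kernel on distinct points away from the origin (equivalently, the process has no deterministic linear relations among values at distinct nonzero times). I would either cite this standard fact about fBm or remark that it follows because the fBm increments process is nondegenerate. With invertibility secured, everything else is a mechanical substitution, which is why the paper is content to omit the details and simply reference~\cite{NP}.
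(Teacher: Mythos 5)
Your proof is correct and follows exactly the route the paper itself indicates: the statement is presented there without proof, as a direct consequence of the Gaussian conditioning theorem applied to the centered Gaussian vector $(X_{t+h},Y^T)^T$, which is precisely what you carry out (with the welcome extra care about the invertibility of $\Sigma_Y$). No discrepancy to report.
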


We also stress the fact that using existing results about forecasting an fBm in continuous time in our framework is less relevant. Indeed, the adaptation of the continuous-time formulas to the discrete-time framework needs the discretization of an integral and therefore only leads to an approximation of $\E\{X_{t+h}|Y\}$. On the contrary, exploiting the covariance matrix of an fBm is both easier and more accurate.

It is also well known that, with the fBm model, the prediction at time horizon $h$ is optimal when the only past observation considered has a time lag also equal to $h$. In particular, the optimality can be understood as the minimization of the MSE. In other words, for $n=2$, considering $t_1=t$ and $t_2=t-h$ is the choice of dates which minimizes the MSE in Proposition~\ref{pro:NP}.

\subsection{Hit ratio}

The MSE, while being a useful statistical tool, may be less relevant in finance than other statistics, such as the hit ratio, which is the probability to make a good forecast of the sign of a future price return. Indeed, in order to make an investment decision, a trader forecasts at time $t$ the price return $R_{t,t+h}$ between $t$ and $t+h$, defined by
\begin{equation}\label{eq:incr}
R_{t,t+h}=X_{t+h}-X_t,
\end{equation}
by a specific predictor $\hat R_{t,t+h}$. If $\hat R_{t,t+h}>0$ (respectively $<0$), the trader must be long (resp. short) between $t$ and $t+h$ in order to expect a gain at time $t+h$. Therefore, the hit ratio is implicitly related to the performance of a trading strategy. Every trader is thus able to interpret the value of a hit ratio and to discard any forecast with a hit ratio lower than $50\%$. On the contrary, the interpretation of the MSE is not as clear and this metric is only useful for comparing several forecasting methods on the same dataset, not for judging the intrinsic quality of a given predictor.

In our framework, log-prices follow an fBm. Therefore, log-returns are simply increments of the fBm. This will ease the calculation and the analysis of the hit ratio, since increments of the fBm are stationary whereas the fBm itself is not stationary. Moreover, the forecast is a weighted sum of past observations. If one considers prices instead of returns, the weights will depend on $t$ and not only on the time lags. This is striking for example if $t=h$. Indeed, as exposed above, when considering two observations for forecasting, the optimal time lag in the past should be $h$ and thus one should forecast $X_{t+h}$ using $X_t$ and $X_{t-h}=X_{0}=0$. But, in this case, the matrix $\Sigma_Y$ is not invertible and we cannot apply Proposition~\ref{pro:NP}.

Before determining the hit ratio, we need to detail the predictor of the true future log-return $R_{t,t+h}$. We forecast $R_{t,t+h}$ conditionally to a vector $S$ of adjacent past price returns. The vector $S$ is built with the help of the set of time lags $\Delta=\{\delta_0,...,\delta_n\}$, with $n\geq 1$ and $\forall i<j,\delta_i<\delta_j$ and $\delta_0\geq 0$. More precisely, we have
$$S=[R_{t-\delta_1,t-\delta_0}\ ...\ R_{t-\delta_n,t-\delta_{n-1}}]^T.$$
For convenience, we set $\delta_0=0$, because it is natural to consider the current state in our forecast, but for the theoretical results exposed below, this constraint is not mandatory. We note our predictor $\hat R_{t,t+h|\Delta}$. The extension of Proposition~\ref{pro:NP} to the increments of the process instead of the process itself is straightforward, since it simply consists in replacing the matrices of covariance of the fBm by the matrices of covariance of its increments.

\begin{pro}\label{pro:NP_incr}
Let $X_t$ be an fBm of Hurst exponent $H\in(0,1)$ and volatility parameter $\sigma>0$, $h>0$, $S=[R_{t-\delta_1,t-\delta_0}\ ...\ R_{t-\delta_n,t-\delta_{n-1}}]^T$, with $\forall i<j$, $\delta_i<\delta_{j}$, $\delta_i\geq 0$, and $R_{.,.}$ expressed by equation~\eqref{eq:incr}. The estimator of $R_{t,t+h}$ minimizing the MSE conditionally to $S$ is
\begin{equation}\label{eq:r_hat}
\hat R_{t,t+h|\Delta}=\Sigma_{RS}\Sigma_{S}^{-1}S
\end{equation}
and the corresponding MSE is
$$\E\{(\hat R_{t,t+h|\Delta}-R_{t,t+h})^2\}=\Sigma_R-\Sigma_{RS}\Sigma_S^{-1}\Sigma_{RS}^T,$$
where $\Sigma_S=\E\{SS^T\}$, $\Sigma_R=\E\{R_{t,t+h}^2\}=\sigma^2h^{2H}$, and $\Sigma_{RS}=\E\{R_{t,t+h}S^T\}$ can be explicitly expressed thanks to equation~\eqref{eq:covMBF_incr}.
\end{pro}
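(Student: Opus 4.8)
The plan is to deduce this statement directly from the Gaussian conditioning theorem, exactly as Proposition~\ref{pro:NP} was obtained, the only difference being that we condition on a vector of increments rather than on a vector of values of the process. First I would observe that, since $X_t$ is a Gaussian process (Definition~\ref{def:fBmVar}), any finite linear combination of the $X_{t_i}$ is Gaussian; in particular each coordinate $R_{t-\delta_i,t-\delta_{i-1}}=X_{t-\delta_{i-1}}-X_{t-\delta_i}$ of $S$ is Gaussian, as is $R_{t,t+h}=X_{t+h}-X_t$, and the whole vector $(R_{t,t+h},S^T)^T$ is jointly Gaussian with mean zero (because $\E\{X_u\}=0$ for all $u$). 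Hence the conditional law of $R_{t,t+h}$ given $S$ is Gaussian, and the MSE-minimizing estimator is the conditional expectation $\E\{R_{t,t+h}\mid S\}$, which for a centered jointly Gaussian pair is the linear map $\Sigma_{RS}\Sigma_S^{-1}S$, with residual variance $\Sigma_R-\Sigma_{RS}\Sigma_S^{-1}\Sigma_{RS}^T$. This is precisely the content of the two displayed formulas.

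Next I would make explicit that all the covariance objects involved are well defined and computable. The entry $(i,j)$ of $\Sigma_S$ is $\E\{R_{t-\delta_i,t-\delta_{i-1}}R_{t-\delta_j,t-\delta_{j-1}}\}$, which is given in closed form by equation~\eqref{eq:covMBF_incr} with the appropriate substitution of the four time arguments; similarly the $i$-th entry of $\Sigma_{RS}$ is $\E\{R_{t,t+h}R_{t-\delta_i,t-\delta_{i-1}}\}$, again an instance of~\eqref{eq:covMBF_incr}, and $\Sigma_R=\E\{R_{t,t+h}^2\}=\E\{(X_{t+h}-X_t)^2\}=\sigma^2h^{2H}$ follows immediately from Definition~\ref{def:fBmVar}. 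Since increments of the fBm are stationary (a consequence of Proposition~\ref{pro:cov_fBm}), these entries depend only on the lags $\delta_i$, $\delta_j$ and on $h$, not on $t$, which is why the predictor is a fixed linear functional of $S$.

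The only genuine technical point is the invertibility of $\Sigma_S$, needed for $\Sigma_S^{-1}$ to make sense; I would note that $\Sigma_S$ is the covariance matrix of the non-overlapping increments $R_{t-\delta_i,t-\delta_{i-1}}$ over the disjoint intervals $(t-\delta_i,t-\delta_{i-1}]$, which are linearly independent Gaussian random variables precisely because the hypotheses $\delta_i<\delta_j$ for $i<j$ and $\delta_i\ge 0$ force the intervals to be nondegenerate and distinct (in contrast to the degenerate situation flagged in the text when $\delta_n=h=t$ would make one increment identically zero). Hence $\Sigma_S$ is positive definite and invertible. With this observation in hand, the statement is, as for Proposition~\ref{pro:NP}, an immediate application of Gaussian conditioning, so I would keep the proof to a couple of lines, referring to~\cite{NP} for the conditioning result and to Proposition~\ref{pro:cov_fBm} for the explicit matrix entries. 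The main obstacle, if any, is simply bookkeeping the index shifts in~\eqref{eq:covMBF_incr} correctly; there is no deep difficulty.
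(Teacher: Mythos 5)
Your proposal is correct and follows essentially the same route as the paper, which simply treats Proposition~\ref{pro:NP_incr} as the extension of Proposition~\ref{pro:NP} obtained by applying the Gaussian conditioning theorem to the jointly Gaussian, centered vector of increments, with all covariance entries given by equation~\eqref{eq:covMBF_incr}. Your additional remark on the positive definiteness of $\Sigma_S$ for distinct, nondegenerate lag intervals is a sound complement that the paper leaves implicit.
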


It is worth noting that $\hat R_{t,t+h|\Delta}$ does only depend on $h$, not on $t$. In what follows, we keep $t$ in most subscripts, but the hit ratios, risk measures, and expected performance displayed in the coming theorems and propositions in fact only depend on the forecast horizon $h$.

One can write equation~\eqref{eq:r_hat} differently, as a weighted sum of past returns:
$$\hat R_{t,t+h|\Delta}=\sum_{i=1}^{n}{\beta_iR_{t-\delta_i,t-\delta_{i-1}}},$$
where the $\beta_i$ must be consistent with the covariance approach exposed in Proposition~\ref{pro:NP_incr}: 
$$[\beta_1\ ...\ \beta_n]=\Sigma_{RS}\Sigma_{S}^{-1}.$$
In the particular case where $n=1$, that is when the predicted return only relies on one past return, namely $\hat R_{t,t+h|\Delta}=\beta_1R_{t-\delta_1,t}$, we have the following expression for $\beta_1$:
$$\beta_1=\frac{\Cov(R_{t,t+h},R_{t-\delta_1,t})}{\Var(R_{t-\delta_1,t})},$$
according to Proposition~\ref{pro:NP_incr}. We note that the covariance and variance appearing in the equation above are not estimated with their empirical version. The $\beta_1$ parameter is indeed model-dependent. The fBm assumption will in particular make it possible to rely on a parsimonious representation of the dependence structure among price returns, even when considering a big number of time lags, and thus to avoid overfitting in the forecast. This specific feature of parsimony will be studied empirically in Section~\ref{sec:rough}. Taking into account the fBm assumption, the expression for $\beta_1$, when $n=1$ is:
\begin{equation}\label{eq:beta1}
\beta_1=\frac{1}{2}\left[\left(\frac{h}{\delta_1}+1\right)^{2H}-\left(\frac{h}{\delta_1}\right)^{2H}-1\right],
\end{equation}
according to equation~\eqref{eq:covMBF_incr}. It is worth noting that $\beta_1$ is positive when $H>1/2$, equal to zero for $H=1/2$, and negative when $H<1/2$. It is consistent with properties of the fBm: increments are positively correlated when $H>1/2$, non correlated when $H=1/2$, and negatively correlated when $H<1/2$. In particular, we note that when $\delta_1=h$, $\beta_1$ tends toward 1 when $H$ tends toward 1: the persistence is such that the best prediction of the future return is the past return of same duration. Alternatively, in the anti-persistence case, when $H$ tends toward 0, whatever the value of $\delta_1$, $\beta_1$ tends toward $-1/2$.

In the general case, that is for $n$ not necessarily equal to 1, we want to determine the theoretical hit ratio of our predictor. As written above, it is the probability to forecast properly the sign of the future price return $R_{t,t+h}$. Two definitions are in fact possible, depending on the conditioning of the probability to past returns or not.

\begin{defn}\label{def:hit}
Let $\hat R_{t,t+h|\Delta}$ be a predictor of $R_{t,t+h}$ based on the vector $S$ of past returns, as defined by equation~\eqref{eq:r_hat}. The conditional hit ratio of this predictor is
$$\rho^c(y)=\proba\left[\hat R_{t,t+h|\Delta}R_{t,t+h}\geq 0\left| S=y\right.\right]$$
and the non-conditional hit ratio is
$$\rho=\proba\left[\hat R_{t,t+h|\Delta}R_{t,t+h}\geq 0\right].$$
\end{defn}

In the fBm framework, the vector $S$ admits a Gaussian density $g_S$ of mean zero and variance $\Sigma_S$. The non-conditional hit ratio is the weighted average of the conditional hit ratio:
$$\rho=\int_{\mathbb R}{\rho^c(y)g_S(y)dy}=\E\left\{\rho^c(S)\right\}.$$

Theorem~\ref{th:hitNrdt} provides a theoretical expression both for $\rho^c$ and $\rho$, when the price returns are assumed to be increments of an fBm.

\begin{thm}\label{th:hitNrdt}
Let $X_t$ be an fBm of Hurst exponent $H\in(0,1)$ and volatility parameter $\sigma>0$. Let $h>0$ and $\hat R_{t,t+h|\Delta}$ be a predictor of $R_{t,t+h}$ based on the vector $S$ of past returns, as defined by equation~\eqref{eq:r_hat}. The conditional hit ratio is
$$\rho^c(y)= N\left(\frac{\left|\Sigma_{RS}\Sigma_S^{-1}y\right|}{\sqrt{\sigma^2h^{2H} - \Sigma_{RS}\Sigma_S^{-1}\Sigma_{RS}^T}}\right)$$
and the non-conditional hit ratio is, for $H\neq1/2$,
\begin{equation}\label{eq:hit}
\rho= 1-\frac{1}{\pi}\arctan\left(\sqrt{\frac{\sigma^2h^{2H}}{\Sigma_{RS}\Sigma_S^{-1}\Sigma_{RS}^T}-1}\right),
\end{equation}
where $\Sigma_{S}$ and $\Sigma_{RS}$ are the same as in Proposition~\ref{pro:NP_incr} and $N$ is the standard Gaussian cumulative distribution function.
\end{thm}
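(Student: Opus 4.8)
The structural fact to exploit is that $X_t$ is Gaussian, so every object here is a linear functional of a Gaussian process: the pair $(R_{t,t+h},S)$ is jointly Gaussian with zero mean, and hence so is the pair $(U,V):=(\hat R_{t,t+h|\Delta},R_{t,t+h})=(\Sigma_{RS}\Sigma_S^{-1}S,\ R_{t,t+h})$. I would establish the two claims separately.

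For the conditional hit ratio I condition on $\{S=y\}$. By the Gaussian conditioning theorem (the one already underlying Proposition~\ref{pro:NP_incr}), the law of $R_{t,t+h}$ given $S=y$ is Gaussian with mean $\Sigma_{RS}\Sigma_S^{-1}y$, which is precisely the realized value of $\hat R_{t,t+h|\Delta}$, and with variance $v:=\sigma^2h^{2H}-\Sigma_{RS}\Sigma_S^{-1}\Sigma_{RS}^T$ that does not depend on $y$. Setting $m:=\Sigma_{RS}\Sigma_S^{-1}y$, the event $\{\hat R_{t,t+h|\Delta}R_{t,t+h}\ge 0\}$ conditionally on $S=y$ equals $\{R_{t,t+h}\ge 0\}$ when $m>0$ and $\{R_{t,t+h}\le 0\}$ when $m<0$; in either case a one-line computation with the Gaussian c.d.f.\ gives probability $N(|m|/\sqrt v)$, which is the stated expression for $\rho^c(y)$. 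The residual set $\{m=0\}$ is a hyperplane, of $g_S$-measure zero as soon as $H\ne1/2$, so it plays no role below.

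For the non-conditional hit ratio I would work directly with the centred bivariate Gaussian $(U,V)$. A direct moment computation using $\E\{SS^T\}=\Sigma_S$ and $\E\{SR_{t,t+h}\}=\Sigma_{RS}^T$ yields $\Var(U)=\Sigma_{RS}\Sigma_S^{-1}\Sigma_{RS}^T=:q$, $\Cov(U,V)=\Sigma_{RS}\Sigma_S^{-1}\Sigma_{RS}^T=q$, and $\Var(V)=\sigma^2h^{2H}=:p$ (Proposition~\ref{pro:NP_incr}), so $r:=\corr(U,V)=q/\sqrt{qp}=\sqrt{q/p}$. Since $\Sigma_S^{-1}$ is positive definite and $\Sigma_{RS}\ne 0$ whenever $H\ne1/2$ (by~\eqref{eq:covMBF_incr}, e.g.\ the strict convexity/concavity argument behind~\eqref{eq:beta1}), one has $q>0$ and $r\in(0,1]$. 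I then invoke the classical orthant identity for a centred bivariate Gaussian, $\proba[U>0,V>0]=\tfrac14+\tfrac1{2\pi}\arcsin\corr(U,V)$; combined with the symmetry $(U,V)\mapsto(-U,-V)$ this gives $\rho=\proba[UV\ge 0]=2\proba[U>0,V>0]=\tfrac12+\tfrac1\pi\arcsin r$. (Equivalently, one may integrate $\rho^c$ against $g_S$: $\rho=\E\{N(|\tau G|)\}$ with $G$ standard Gaussian and $\tau^2=q/(p-q)$, introduce an independent standard Gaussian, and reduce to the same orthant probability with correlation $\tau/\sqrt{1+\tau^2}=r$.)

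Finally I convert the $\arcsin$ form into the stated $\arctan$ form. One has $\sqrt{\sigma^2h^{2H}/(\Sigma_{RS}\Sigma_S^{-1}\Sigma_{RS}^T)-1}=\sqrt{p/q-1}=\sqrt{1-r^2}/r$, and for $r\in(0,1]$ the elementary identity $\arctan(\sqrt{1-r^2}/r)=\arccos r=\tfrac{\pi}{2}-\arcsin r$ holds; substituting gives $1-\tfrac1\pi\arctan(\sqrt{1-r^2}/r)=1-\tfrac1\pi(\tfrac\pi2-\arcsin r)=\tfrac12+\tfrac1\pi\arcsin r=\rho$, i.e.\ equation~\eqref{eq:hit}. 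The only non-routine ingredient is the bivariate Gaussian orthant identity (equivalently Sheppard's formula); the rest is Gaussian conditioning and bookkeeping, the sole subtlety being the measure-zero event $\hat R_{t,t+h|\Delta}=0$, which is harmless exactly because $H\ne1/2$ forces $\Sigma_{RS}\ne0$.
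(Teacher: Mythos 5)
Your proof is correct, and while the conditional part coincides with the paper's (Gaussian conditioning gives $R_{t,t+h}\mid S=y \sim \mathcal N\left(\Sigma_{RS}\Sigma_S^{-1}y,\ b^2\right)$ with $b^2=\sigma^2h^{2H}-\Sigma_{RS}\Sigma_S^{-1}\Sigma_{RS}^T$), the non-conditional part takes a genuinely different route. The paper computes the same second moments you do, but then represents $[\hat R_{t,t+h|\Delta}\ R_{t,t+h}]^T=\Sigma[U\ V]^T$ via a Cholesky factorization and evaluates $\proba[UV\geq 0]$ by an explicit double integral, reduced through Lemma~\ref{lem:hit} ($\int_0^{\infty}N(\alpha x)g(x)dx=\tfrac14+\tfrac1{2\pi}\arctan\alpha$), arriving at $\rho=\tfrac12+\tfrac1\pi\arctan(a/b)$; you instead invoke Sheppard's orthant formula $\proba[U>0,V>0]=\tfrac14+\tfrac1{2\pi}\arcsin\corr(U,V)$ with $\corr=\sqrt{q/p}$ and then use $\arctan\left(\sqrt{1-r^2}/r\right)=\tfrac\pi2-\arcsin r$, which matches the stated form since $a/b=r/\sqrt{1-r^2}$. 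Both are sound; the paper's lemma is really a self-contained proof of the same orthant identity, and its Cholesky-plus-integral machinery is deliberately set up so that the truncated versions (Lemmas~\ref{lem:intBis} and~\ref{lem:intTer}) carry over directly to the thresholded quantities in Theorems~\ref{th:ternary} and~\ref{th:ternaryFin}, whereas citing Sheppard's formula buys brevity here but does not extend to those truncated integrals. Two details you handle at least as carefully as the paper: you justify $\Sigma_{RS}\neq 0$ (hence $q>0$) for $H\neq\tfrac12$ via the strict monotonicity of $x\mapsto(x+h)^{2H}-x^{2H}$, which the paper leaves implicit, and you flag the measure-zero event $\hat R_{t,t+h|\Delta}=0$, which the paper's decomposition also glosses over.
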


The proof of Theorem~\ref{th:hitNrdt} is postponed in Appendix~\ref{sec:hitNrdt}.

The case $H= 1/2$, for which the fBm is a standard Brownian motion, is of particular interest. In this case, the $1\times 1$ matrix $\Sigma_R$ contains the element $\sigma^2 h$, the  $1\times n$ matrix $\Sigma_{RS}$ only contains zeros, and the $n\times n$ matrix $\Sigma_S$ is diagonal equal to $\text{diag}(\sigma^2(\delta_1-\delta_0),...,\sigma^2(\delta_n-\delta_{n-1}))$. As a consequence, according to Theorem~\ref{th:hitNrdt}, the conditional hit ratio $\rho^c(y)$ is 0.5, whatever $y$. This is consistent with the fact that the standard Brownian motion is a martingale. Besides, any value of $H$ different from $1/2$ trivially leads to $\rho^c>0.5$ since it is the image by $N$ of a positive number. Regarding the non-conditional hit ratio $\rho$, equation~\ref{eq:hit} is not defined for $H=1/2$, but we can easily determine that the limit of $\rho$, when $H\rightarrow 1/2$, is also 0.5. Once, again, any value of $H$ different from $1/2$ leads to $\rho>0.5$, that is to more frequent good predictions than bad predictions, as one can see in Figure~\ref{fig:HitRatio}.

\begin{figure}[htbp]
	\centering
		\includegraphics[width=0.75\textwidth]{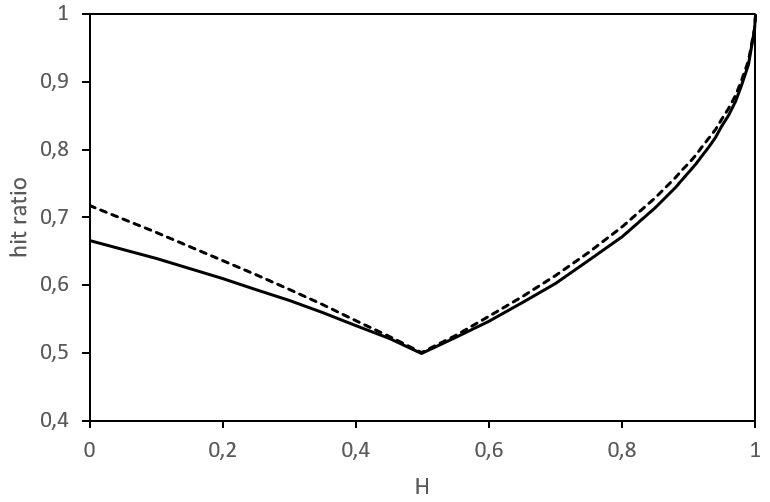} 
\begin{minipage}{0.7\textwidth}\caption{Theoretical non-conditional hit ratio for the fBm with one time lag and $\delta_1=h$ (continuous line) and for four time lags and $(\delta_0,...,\delta_4)=(0,0.5,1,2,3)$ (dotted line).}
	\label{fig:HitRatio}
\end{minipage}
\end{figure}

In the rest of the paper, the theoretical hit ratio on which we focus is $\rho$, the non-conditional hit ratio. Indeed, in a practical application, we are inclined to choose a specific forecast setting, that is a number of lags and a duration for these lags. The maximization of a theoretical hit ratio ex ante must lead the choice of this setting. This choice is not that simple if one wants to maximize the conditional hit ratio, insofar as one must numerically test the best combination of lagged observations to finally retain the one maximizing $\rho^c$. A legitimate question then arises: Why not retaining all the tested lagged observations? If one uses $\rho$ instead of $\rho^c$, the choice of time lags is simpler and will always be the same, provided that the estimated $H$ is also the same, because the choice of time lags in this case does not depend on the particular value of each past observation but only on the estimated $H$.

We see in Figure~\ref{fig:HitRatio} that Hurst exponents closer to 0 or 1 than to $1/2$ lead to higher hit ratios. Hurst exponents $H$ above $1/2$ also lead to more accurate predictions than when the Hurst exponent is $1-H$. In particular, $H$ close to 1, that is a very persistent series of returns, leads to the highest possible hit ratio, whereas $H$ close to 0 only leads to a hit ratio of 0.67 if the forecast is based on one past return ($n=1$). In other words, it is globally more difficult to make good predictions when $H<1/2$ than when $H>1/2$ but predictions are however good in average, with a hit ratio higher than 0.5 in both cases. This contradicts a part of the econophysics literature which tends to consider that predictions are relevant only for $H>1/2$. We can even stress that predictions will be more accurate for $H=0.2$ than for $H=0.6$.

In the particular case where $n=1$, for which we have already provided an explicit expression for $\beta_1$ in equation~\ref{eq:beta1}, the hit ratio also has a simpler expression:
$$\rho=1-\frac{1}{\pi}\arctan\left(\sqrt{\frac{1}{\beta_1^2}\left(\frac{h}{\delta_1}\right)^{2H}-1}\right)$$
because $\Sigma_{RS}\Sigma_S^{-1}\Sigma_{RS}^T$, in this case, is equal to $\beta_1\Sigma_{RS}^T=\beta_1^2\Sigma_S=\sigma^2\beta_1^2\delta_1^{2H}$.

In a systematic investment perspective, a forecast based on only one past return is often not enough to generate a profitable strategy because of too low hit ratios, in particular because the Hurst exponent in finance is almost never in the interval $(0.75,1)$, where we observe the highest theoretical hit ratios. In order to build a systematic strategy, the prediction is to be improved by including several past returns. Even without optimizing the duration of these past returns, one sees in Figure~\ref{fig:HitRatio} that adding arbitrary time lags in the set $\Delta$ indeed increases the hit ratio. This effect is very limited when the Hurst exponent is higher than $1/2$, but it is promising for values of $H$ below $1/2$.

In this investment perspective, the question of the best choice for $n$, the number of past returns to be included in the predictor, is an important topic. Theoretically, the optimal $n$ maximizing the theoretical hit ratio is $+\infty$. However, considering a predictor based on a big number of past returns leads in general to two pitfalls: a higher computation time and overfitting. The first point is a major issue in high-frequency trading. To avoid the second drawback, one prefers minimizing information criteria, such as AIC or BIC, instead of simply maximizing an ex-ante hit ratio. Such criteria are based on the likelihood of the model and penalize the number of parameters. Regarding the likelihood, reality is always more complex than any model, so adding time lags will first improve the likelihood but it may also deteriorate it for a too big number of time lags due to a discrepancy between reality and the fBm specification. Regarding the number of parameters, it does not depend on the number of time lags if the model is an fBm. Indeed, the weights $\beta_i$ applied to past returns all depend on the parameters of the fBm. 

The choice of the $n$ minimizing AIC or BIC thus depends on the dataset. We will see an illustration of this problem in the empirical part of the paper, in Section~\ref{sec:rough}. Once $n$ is selected, the duration of the $n$ past price returns is to be selected too. We will see in the next subsection how this can be achieved using the theoretical hit ratio.

\subsection{Optimal duration of time lags}\label{sec:optim}

The question of the optimal duration of time lags is to be related to the question of the optimal sampling of high-frequency observations of asset prices, with which the econometric literature has already dealt without any fBm assumption~\cite{BR,Oomen}. Observations are equally spaced in time in this econometric approach. This will not be the case in our fBm-based predictor.

We consider a fixed number $n\geq 1$ of past returns to be used in the predictor defined in Proposition~\ref{pro:NP_incr}. We write them $R_{t-\delta_1,t-\delta_0},...,R_{t-\delta_n,t-\delta_{n-1}}$, with $\forall i<j$, $\delta_i<\delta_{j}$, $\delta_i\geq 0$, and $\delta_0=0$. Given a fixed prediction horizon $h>0$, the vector of time lags $[\delta_0\ ...\ \delta_n]^T$ maximizing the hit ratio $\rho$ of Theorem~\ref{th:hitNrdt} is noted $[\delta_0^{\star}\ ...\ \delta_n^{\star}]^T$. A numerical optimization of $\rho$ provides these optimal time lags. This numerical study leads to the following observation:
$$\forall i\in\llbracket 1,n\rrbracket,\ \delta_i^{\star}=\frac{h^2}{\delta_{n+1-i}^{\star}}.$$

We briefly focus on the case $n=1$. Figure~\ref{fig:HitDelta} displays the value of the hit ratio as a function of $\delta_1/h$. The optimum in $\delta_1/h=1$ is clear and it also appears that the hit ratio is the same for $\delta_1/h$ and $h/\delta_1$. Therefore, the optimal time lag for the past return used in the forecast is equal to the forecast horizon: $\delta_1=h$. In other words, to forecast the future daily return, one has to consider the last observed daily return. For a monthly return to be forecast, the last observed monthly return is the optimal input of the fBm-based prediction model. This is consistent with the existing literature~\cite{NP}. 

\begin{figure}[htbp]
	\centering
		\includegraphics[width=0.6\textwidth]{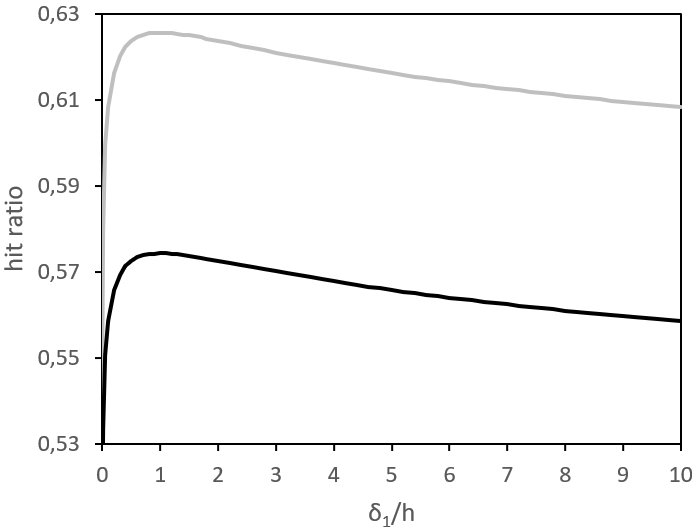} 
\begin{minipage}{0.7\textwidth}\caption{Hit ratio with $H=0.65$ (black) and $H=0.15$ (grey) for the fBm, for various values of the ratio $\delta_1/h$ and $n=1$.}
	\label{fig:HitDelta}
\end{minipage}
\end{figure}

Other cases than $n=1$ provide new insights. Increasing $n$ of one unit does indeed not mean adding one time lag to the set of time lags optimized in the case $n-1$. For example, if $n=2$, $h$ is not one of the optimal time lags. Instead, if we consider $h=1$ for simplifying and $H=0.65$, the optimal lags are $(\delta_1,\delta_2)=(0.29,3.45)$. For a monthly forecast, so for a forecast horizon of 22 days, the optimal time lags to be considered are thus roughly 6 and 76 days. For a daily forecast, intraday data are to be taken into account. If $n$ is odd (respectively even), one has to consider $(n-1)/2$ (resp. $n/2$) time lags lower than $h$, the same quantity bigger than $h$, and one (resp. zero) time lag equal to $h$. We gather in Tables~\ref{tab:lag65} and~\ref{tab:lag15} the optimal lags for two values of $H$ and various $n$. We provide them for $h=1$. Multiplying them by $h$ gives the optimal time lags for a forecast horizon equal to $h$.

\begin{table}[htbp]
\centering
\begin{tabular}{c|cccccc|c}
\hline
$n$ & $\delta_1^{\star}$ & $\delta_2^{\star}$ & $\delta_3^{\star}$ & $\delta_4^{\star}$ & $\delta_5^{\star}$ & $\delta_6^{\star}$ & $\rho$ \\
\hline
 1 & 1.000 & & & & & & $57.42\%$ \\
 2 & 0.289 & 3.454 & & & & & $58.14\%$ \\
 3 & 0.127 & 1.000 & 7.896 & & & & $58.72\%$ \\
 4 & 0.067 & 0.458 & 2.185 & 14.979 & & & $58.90\%$ \\
 5 & 0.039 & 0.253 & 1.000 & 3.949 & 25.407 & & $58.99\%$ \\
 6 & 0.025 & 0.156 & 0.562 & 1.780 & 6.411 & 39.919 & $59.05\%$ \\
\hline
\end{tabular}
\begin{minipage}{0.7\textwidth}\caption{For various numbers $n$ of lagged returns, optimal time lags and corresponding theoretical hit ratio. The dynamic is an fBm of Hurst exponent 0.65 and the forecast horizon is $h=1$.}
\label{tab:lag65}
\end{minipage}
\end{table}

\begin{table}[htbp]
\centering
\begin{tabular}{c|cccccc|c}
\hline
$n$ & $\delta_1^{\star}$ & $\delta_2^{\star}$ & $\delta_3^{\star}$ & $\delta_4^{\star}$ & $\delta_5^{\star}$ & $\delta_6^{\star}$ & $\rho$ \\
\hline
 1 & 1.000 & & & & & & $62.56\%$ \\
 2 & 0.367 & 2.726 & & & & & $64.34\%$ \\
 3 & 0.193 & 1.000 & 5.168 & & & & $65.72\%$ \\
 4 & 0.120 & 0.539 & 1.856 & 8.365 & & & $66.29\%$ \\
 5 & 0.081 & 0.341 & 1.000 & 2.933 & 12.347 & & $66.66\%$ \\
 6 & 0.058 & 0.236 & 0.637 & 1.570 & 4.241 & 17.170 & $66.91\%$ \\
\hline
\end{tabular}
\begin{minipage}{0.7\textwidth}\caption{For various numbers $n$ of lagged returns, optimal time lags and corresponding theoretical hit ratio. The dynamic is an fBm of Hurst exponent 0.15 and the forecast horizon is $h=1$.}
\label{tab:lag15}
\end{minipage}
\end{table}

Another question is important for defining a systematic trading strategy: what is the best between defining each price return on the one hand between $t-\delta_i$ and $t-\delta_{i-1}$ and, on the other hand, between $t-\delta_i$ and $t$? It is worth noting that if we compare the two vectors $[R_{t-\delta_1,t-\delta_0}\ ...\ R_{t-\delta_n,t-\delta_{n-1}}]^T$ and $[R_{t-\delta_1,t}\ ...\ R_{t-\delta_n,t}]^T$, the weight vector $[\beta_1\ ...\ \beta_n]^T$ associated to each vector of price returns is different, but the accuracy of the corresponding predictor, defined as the non-conditional hit ratio, is the same. Consequently, the optimal lags are also the same between the two versions. Whatever the way one splits the time lags among the different returns, provided that there is no redundancy, the non-conditional quality of the forecast will be the same. All the information used by the predictor comes in fact from the list of time lags. One must thus concentrate one's efforts on the selection of $\Delta$ instead of on its specific division as bounds of all the price returns.

\section{Statistical arbitrage: to predict or not to predict}\label{sec:statarb}

The theoretical results in Section~\ref{sec:forecast}, about hit ratios for predictors of an fBm, are encouraging. Indeed, many values of the Hurst exponent lead to an acceptable hit ratio in the perspective of building a trading strategy. However a more precise look at the predicted increments show some forecasts close to zero which may affect the performance of a trading strategy for two reasons. First, the smaller the predictions in absolute value, the higher the incertitude about the sign of the future return. The second reason is almost a tautology: the smaller the future price returns of an asset in absolute value, the lower the financial performance a trader can expect by following an investment strategy using this asset. Because of these two reasons, a trader may want to discard predictions which are close to zero. We are thus interested in understanding how the quality of the forecast evolves when one applies a thresholding to the predictor introduced in equation~\eqref{eq:r_hat}. Next, we want not to limit our analysis to hit ratios but to introduce more financial criteria, such as expected performance and risk of a simple trading strategy based on the thresholding concept. This strategy consists in buying the asset if the predicted return is significantly above zero, short selling the asset if it is significantly below, and keeping a neutral position if the prediction is too close to zero for the trader to be confident in the predicted sign of the future return.

\subsection{Threshold and hit ratio}\label{sec:ThreshHit}

We still suppose that log-prices follow an fBm. Thanks to this assumption, we can build predictors following equation~\eqref{eq:r_hat}, in which we can input a number $n$ of lagged observed returns. We want to determine the theoretical forecasting performance of such a predictor when one discards the predicted values which are close to zero. In other words, one defines a threshold $\theta\geq 0$ and ignores the predictions lower than $\theta$ in absolute value. We thus introduce a thresholding function 
$$f_{\theta}:x\in\mathbb R\mapsto x\indic_{\{|x|\geq \theta\}},$$
so that the considered predictor is now $f_{\theta}\left(\hat R_{t,t+h|\Delta}\right)$.

The hit ratio we used in Section~\ref{sec:forecast} is not appropriate in this new framework. Indeed, the hit ratio only supposes two states: on the one hand, the good prediction of the sign of the future return and, on the other hand, the bad prediction. We now use a ternary classification of predictions: good, bad, and zero prediction. The statistics depicting the forecast accuracy of the new predictor can be based on the probability of each of these three states.

\begin{defn}\label{def:probaTernary}
Let $\hat R_{t,t+h|\Delta}$ be a predictor of $R_{t,t+h}$ based on the vector $S$ of past returns, as defined by equation~\eqref{eq:r_hat}. Given $\theta\geq 0$, we transform $\hat R_{t,t+h|\Delta}$ in a new predictor $f_{\theta}\left(\hat R_{t,t+h|\Delta}\right)$. The non-conditional probability of good sign forecast is
$$p^+(\theta)=\proba\left[f_{\theta}\left(\hat R_{t,t+h|\Delta}\right)R_{t,t+h}>0\right],$$ the non-conditional probability of bad sign forecast is 
$$p^-(\theta)=\proba\left[f_{\theta}\left(\hat R_{t,t+h|\Delta}\right)R_{t,t+h}<0\right],$$
and the non-conditional probability of zero forecast is
$$p^0(\theta)=\proba\left[f_{\theta}\left(\hat R_{t,t+h|\Delta}\right)=0\right].$$
\end{defn}

We stress the fact that we are only interested here in non-conditional probabilities. Taking into account conditional probabilities may lead to better predictors, in which more appropriate weights are used for lagged returns, and thus to a better performance. Nevertheless, the conditional approach is very specific to the data observed and is a useless refinement in our perspective. The purpose of this paper is indeed not the fine tuning of a trading strategy but, instead, it is about setting general results concerning the relevance of fBm-based predictions in finance. 

The formula for the probabilities introduced in Definition~\ref{def:probaTernary} are expressed in Theorem~\ref{th:ternary}, in which the fBm assumption for log-prices is used.

\begin{thm}\label{th:ternary}
Let $X_t$ be an fBm of Hurst exponent $H\in(0,1/2)\cup(1/2,1)$ and $\theta\geq 0$. Let $h>0$ and $\hat R_{t,t+h|\Delta}$ be a predictor of $R_{t,t+h}$ based on the vector $S$ of past returns, as defined by equation~\eqref{eq:r_hat}. The non-conditional probabilities of sign forecast, as introduced in Definition~\ref{def:probaTernary}, are provided by a Taylor expansion, when $\theta$ is in the neighbourhood of 0, for $p^+(\theta)$:
$$p^+(\theta)=1-N\left(\frac{\theta}{a}\right)+\frac{1}{\pi}\arctan\left(\frac{a}{b}\right)-\frac{\theta^2}{2\pi ab}+\left(\frac{1}{ab^3}+\frac{3}{a^4b}\right)\frac{\theta^4}{24\pi}+\mathcal O (\theta^6)$$
and for $p^-(\theta)$:
$$p^-(\theta)=1-N\left(\frac{\theta}{a}\right)-\frac{1}{\pi}\arctan\left(\frac{a}{b}\right)+\frac{\theta^2}{2\pi ab}-\left(\frac{1}{ab^3}+\frac{3}{a^4b}\right)\frac{\theta^4}{24\pi}+\mathcal O (\theta^6),$$
as well as by the following exact formula for $p^0(\theta)$:
$$p^0(\theta)=-1+2N\left(\frac{\theta}{a}\right),$$
where $a= \sqrt{\Sigma_{RS}\Sigma_S^{-1}\Sigma_{RS}^T}$, $b=\sqrt{\sigma^2h^{2H} - \Sigma_{RS}\Sigma_S^{-1}\Sigma_{RS}^T}$, and where $\Sigma_{S}$ and $\Sigma_{RS}$ are the same as in Proposition~\ref{pro:NP_incr}.
\end{thm}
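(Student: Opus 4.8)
The strategy is to reduce everything to the joint law of the two scalar centred Gaussian variables $\hat R:=\hat R_{t,t+h|\Delta}$ and $R:=R_{t,t+h}$. Since $S$ is centred Gaussian with covariance $\Sigma_S$ and $\hat R=\Sigma_{RS}\Sigma_S^{-1}S$, the predictor $\hat R$ is centred Gaussian with variance $\Sigma_{RS}\Sigma_S^{-1}\Sigma_S\Sigma_S^{-1}\Sigma_{RS}^T=\Sigma_{RS}\Sigma_S^{-1}\Sigma_{RS}^T=a^2$. By the Gaussian conditioning theorem underlying Proposition~\ref{pro:NP_incr}, the forecast error $\varepsilon:=R-\hat R$ is uncorrelated with, hence independent of, $S$ and therefore of $\hat R$, and it is centred Gaussian with variance equal to the MSE of Proposition~\ref{pro:NP_incr}, namely $\sigma^2h^{2H}-\Sigma_{RS}\Sigma_S^{-1}\Sigma_{RS}^T=b^2$. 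Thus $R=\hat R+\varepsilon$ with $\hat R\sim\mathcal N(0,a^2)$, $\varepsilon\sim\mathcal N(0,b^2)$, $\hat R\perp\varepsilon$, so in particular $\proba[R>0\mid\hat R=u]=\proba[\varepsilon>-u]=N(u/b)$. The hypothesis $H\in(0,1/2)\cup(1/2,1)$ ensures $a>0$ (for $H=1/2$ the fBm is a martingale and $\hat R\equiv0$, a degenerate case), while $b>0$ because the covariance matrix of $(R_{t,t+h},S)$ is positive definite, so every quantity below is well defined.

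Since $f_\theta(x)=x\,\indic_{\{|x|\geq\theta\}}$ vanishes exactly on $\{|x|<\theta\}$ up to the null event $\{\hat R=0\}$, the exact formula $p^0(\theta)=\proba[|\hat R|<\theta]=N(\theta/a)-N(-\theta/a)=2N(\theta/a)-1$ follows at once. For $p^+$, write $f_\theta(\hat R)R=\hat R R\,\indic_{\{|\hat R|\geq\theta\}}$, so that $\{f_\theta(\hat R)R>0\}=\{|\hat R|\geq\theta\}\cap\{\hat R R>0\}$, and split this event according to the sign of $\hat R$; since the centred Gaussian pair $(\hat R,R)$ is invariant under $(\hat R,R)\mapsto(-\hat R,-R)$, the two halves have equal probability, giving $p^+(\theta)=2\,\proba[\hat R\geq\theta,\ R>0]=\frac{2}{a}\int_\theta^\infty\phi(u/a)\,N(u/b)\,du$, where $\phi$ is the standard Gaussian density and the conditional probability above has been used. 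The analogous computation with $\proba[R<0\mid\hat R=u]=1-N(u/b)$ yields $p^-(\theta)=\frac{2}{a}\int_\theta^\infty\phi(u/a)\bigl(1-N(u/b)\bigr)\,du=2\bigl(1-N(\theta/a)\bigr)-p^+(\theta)$, so it suffices to expand $p^+$; equivalently $p^-$ is recovered from $p^++p^-+p^0=1$ once the boundary null-sets $\{|\hat R|=\theta\}$ and $\{\hat R R=0\}$ are seen to be negligible.

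To obtain the expansion, split $\int_\theta^\infty=\int_0^\infty-\int_0^\theta$. For the first piece, the substitution $v=u/a$ together with the classical identity $\int_0^\infty\phi(v)N(cv)\,dv=\tfrac14+\tfrac1{2\pi}\arctan c$ --- proved by differentiating under the integral sign, the $c$-derivative being $\tfrac1{2\pi}\int_0^\infty v\,e^{-v^2(1+c^2)/2}\,dv=\tfrac1{2\pi(1+c^2)}$ and the value at $c=0$ being $\tfrac14$ --- with $c=a/b$ gives $\frac{2}{a}\int_0^\infty\phi(u/a)N(u/b)\,du=\tfrac12+\tfrac1\pi\arctan(a/b)$, i.e. $p^+(0)$. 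For the second piece, multiply the Taylor series $\phi(u/a)=\tfrac1{\sqrt{2\pi}}\bigl(1-\tfrac{u^2}{2a^2}+\cdots\bigr)$ and $N(u/b)=\tfrac12+\tfrac1{\sqrt{2\pi}}\bigl(\tfrac ub-\tfrac{u^3}{6b^3}+\cdots\bigr)$, integrate the product term by term on $[0,\theta]$, and multiply by $-2/a$. Comparing with $1-N(\theta/a)=\tfrac12-\tfrac1{\sqrt{2\pi}}\bigl(\tfrac\theta a-\tfrac{\theta^3}{6a^3}+\cdots\bigr)$, the odd powers of $\theta$ cancel and one is left precisely with the stated even-order expansion of $p^+$ up to $\mathcal O(\theta^6)$; the expansion of $p^-$ then follows from $p^-=2\bigl(1-N(\theta/a)\bigr)-p^+$.

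Once the joint Gaussian structure is in place the computation is routine; the parts that need care are the classical integral identity for $\int_0^\infty\phi(v)N(cv)\,dv$, the bookkeeping in the power-series multiplication and term-by-term integration --- checking that the odd-order terms cancel against those of $1-N(\theta/a)$ and that the remainder is genuinely $\mathcal O(\theta^6)$ --- and the treatment of the boundary null-sets so that the three probabilities add up to one.
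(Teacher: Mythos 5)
Your proposal is correct and essentially the paper's own argument: your decomposition $R=\hat R+\varepsilon$ with $\hat R\sim\mathcal N(0,a^2)$ independent of $\varepsilon\sim\mathcal N(0,b^2)$ is exactly the Cholesky step ($\hat R=aU$, $R=aU+bV$) used in Appendix~\ref{sec:ternary}, your reduction of $p^{\pm}(\theta)$ to $\frac{2}{a}\int_{\theta}^{\infty}\phi(u/a)N(\pm u/b)\,du$ coincides with equations~\eqref{eq:p+}--\eqref{eq:p-}, the arctan identity is Lemma~\ref{lem:hit}, and your term-by-term integration of the power series on $[0,\theta]$ is just an equivalent way of producing the Taylor coefficients that the paper obtains via Leibniz derivatives in Lemma~\ref{lem:intBis} (your direct $p^0(\theta)=2N(\theta/a)-1$ versus the paper's $1-p^+-p^-$ is immaterial). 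One caveat: carrying out your bookkeeping gives the $\theta^4$ coefficient $\left(\frac{1}{ab^3}+\frac{3}{a^3b}\right)\frac{1}{24\pi}$, which is also what Lemma~\ref{lem:intBis} yields, so the $\frac{3}{a^4b}$ appearing in the theorem's statement is a typo (dimensionally inconsistent), and your claim that the series multiplication reproduces the stated expansion ``precisely'' should be read modulo that correction.
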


The proof of Theorem~\ref{th:ternary} is postponed in Appendix~\ref{sec:ternary}.

Theorem~\ref{th:ternary} provides only Taylor expansions of $p^+(\theta)$ and $p^-(\theta)$, whereas we have an exact formula for $p^0(\theta)$. However, beyond the expansions, we can describe how $p^+(\theta)$ and $p^-(\theta)$ evolve with $\theta$, whatever $H$. Indeed, both $p^+$ and $p^-$ are trivially decreasing functions. Moreover, $p^+$ decreases more rapidly than $p^-$, as stated in Proposition~\ref{pro:derivP}. 

\begin{pro}\label{pro:derivP}
With the assumptions of Theorem~\ref{th:ternary}, for $\theta\geq 0$ we have:
$$\frac{dp^+}{d\theta}(\theta) \leq \frac{dp^-}{d\theta}(\theta) \leq 0.$$
\end{pro}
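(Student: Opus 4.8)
The plan is to work directly from the exact and asymptotic formulas in Theorem~\ref{th:ternary}, but it is cleaner to first re-derive $p^+$ and $p^-$ in a closed integral form rather than to differentiate the Taylor expansions term by term (which would only control the derivative near $\theta=0$). From the proof of Theorem~\ref{th:ternary} one knows that $(\hat R_{t,t+h|\Delta},R_{t,t+h})$ is a centred bivariate Gaussian vector; write $\hat R\sim\mathcal N(0,a^2)$, and conditionally on $\hat R$, $R_{t,t+h}$ has mean $\frac{a}{\cdot}$-proportional to $\hat R$ (more precisely mean $\hat R$ itself, since $\hat R$ is a conditional expectation) and conditional standard deviation $b=\sqrt{\sigma^2h^{2H}-a^2}$. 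Then
$$p^+(\theta)=\int_{|x|\ge\theta} N\!\left(\frac{|x|}{b}\right)\frac{1}{a}\varphi\!\left(\frac{x}{a}\right)dx,\qquad p^-(\theta)=\int_{|x|\ge\theta}\left(1-N\!\left(\frac{|x|}{b}\right)\right)\frac{1}{a}\varphi\!\left(\frac{x}{a}\right)dx,$$
where $\varphi$ is the standard Gaussian density, and here I use the case $H>1/2$ so that the "good sign" event corresponds to $\hat R$ and $R_{t,t+h}$ agreeing in sign; the case $H<1/2$ is symmetric and merely flips the role of the two integrands, which does not affect the final inequality.

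Differentiating each integral in $\theta$ by the Leibniz rule, the boundary terms at $x=\pm\theta$ give
$$\frac{dp^+}{d\theta}(\theta)=-\frac{2}{a}\varphi\!\left(\frac{\theta}{a}\right)N\!\left(\frac{\theta}{b}\right),\qquad \frac{dp^-}{d\theta}(\theta)=-\frac{2}{a}\varphi\!\left(\frac{\theta}{a}\right)\left(1-N\!\left(\frac{\theta}{b}\right)\right).$$
Both expressions are manifestly $\le 0$, giving the right-hand inequality $\frac{dp^-}{d\theta}(\theta)\le 0$. For the left-hand inequality, subtract:
$$\frac{dp^+}{d\theta}(\theta)-\frac{dp^-}{d\theta}(\theta)=-\frac{2}{a}\varphi\!\left(\frac{\theta}{a}\right)\left(2N\!\left(\frac{\theta}{b}\right)-1\right),$$
which is $\le 0$ for all $\theta\ge 0$ precisely because $N(\theta/b)\ge N(0)=1/2$ when $\theta\ge0$ and $b>0$ (and $b>0$ is guaranteed since $H\neq 1/2$ forces $a^2<\sigma^2h^{2H}$ strictly, the prediction being non-degenerate). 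This chain of three facts — $\frac{dp^-}{d\theta}\le 0$, $\frac{dp^+}{d\theta}-\frac{dp^-}{d\theta}\le 0$, hence $\frac{dp^+}{d\theta}\le\frac{dp^-}{d\theta}\le 0$ — is exactly the claim.

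The only real obstacle is bookkeeping: one must be careful that the integral representations of $p^+$ and $p^-$ above are genuinely consistent with the Taylor expansions quoted in Theorem~\ref{th:ternary} (a quick sanity check is that $p^+(\theta)+p^-(\theta)+p^0(\theta)=1$, which forces $p^+(\theta)+p^-(\theta)=2(1-N(\theta/a))$, matching $\frac{dp^+}{d\theta}+\frac{dp^-}{d\theta}=-\frac{2}{a}\varphi(\theta/a)$ from the formulas above). One must also handle the sign bookkeeping between the $H>1/2$ and $H<1/2$ cases, but as noted this only swaps $N(\theta/b)\leftrightarrow 1-N(\theta/b)$ in the two integrands, so the same three-step argument goes through verbatim after relabelling. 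If one prefers to avoid re-deriving the integral form, an alternative is to note that $p^+-p^-$ equals the non-conditional "signed hit" quantity $\mathbb E[\mathrm{sign}(\hat R_{t,t+h|\Delta})\,\mathrm{sign}(R_{t,t+h})\,\indic_{\{|\hat R_{t,t+h|\Delta}|\ge\theta\}}]$ and differentiate that, but the Gaussian integral route is the most transparent.
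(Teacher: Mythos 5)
Your proof is correct and essentially the paper's own argument: the paper likewise differentiates the exact integral expressions for $p^+$ and $p^-$ (equations~\eqref{eq:p+} and~\eqref{eq:p-} in the proof of Theorem~\ref{th:ternary}), obtains the same boundary-term derivatives $-\frac{2}{a}N(\theta/b)g(\theta/a)$ and $-\frac{2}{a}\left(1-N(\theta/b)\right)g(\theta/a)$, and concludes from $N(\theta/b)\geq 1/2$ for $\theta\geq 0$, merely phrased via the ratio of the two derivatives rather than their difference. One minor remark: your case split between $H>1/2$ and $H<1/2$ is superfluous, since $\Cov(\hat R_{t,t+h|\Delta},R_{t,t+h})=\Sigma_{RS}\Sigma_S^{-1}\Sigma_{RS}^T=a^2>0$ for every $H\neq 1/2$, so the conditional mean of $R_{t,t+h}$ given $\hat R_{t,t+h|\Delta}=x$ is $x$ in all cases and your integral representation holds verbatim without any relabelling.
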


\begin{proof}
We remark from equations~\eqref{eq:p+} and~\eqref{eq:p-}, in the proof of Theorem~\ref{th:ternary}, that the derivative of $p^+$ and $p^-$ is:
$$\left\{\begin{array}{lll}
\frac{d}{d\theta}p^+(\theta) & = & -\frac{2}{a}N\left(\frac{\theta}{b}\right)g\left(\frac{\theta}{a}\right) \\
\frac{d}{d\theta}p^-(\theta) & = & -\frac{2}{a}\left(1-N\left(\frac{\theta}{b}\right)\right)g\left(\frac{\theta}{a}\right).
\end{array}\right.$$
The two derivatives are negative and the ratio of the second one over the first one is $N(\theta/b)^{-1}-1$, which is in the interval $[0,1]$ for $\theta\geq 0$, meaning that the derivative of $p^+$ with respect to $\theta\geq 0$ is greater in absolute value than the one of $p^-$.
\end{proof}

In Theorem~\ref{th:ternary}, when the threshold is $\theta=0$, the zero forecast probability is $p^0(0)=0$. This particular case corresponds to the binary classification of predictions of Section~\ref{sec:forecast}. The non-conditional hit ratio $\rho$ introduced in Definition~\ref{def:hit} corresponds to $p^+(0)$, whose formula in Theorem~\ref{th:ternary} matches the formula of $\rho$ in Theorem~\ref{th:hitNrdt}.

The greater the threshold $\theta$, the greater the value of $p^0(\theta)$, whose limit, when $\theta\rightarrow+\infty$, is 1. As a consequence, if $\theta\neq 0$, we do not have a binary classification between good and bad forecast but a ternary one, the third state being the zero prediction because of the thresholding. The hit ratio is thus inappropriate. One can think of several alternatives exploiting $p^+(\theta)$, $p^-(\theta)$, and $p^0(\theta)$.

The first natural alternative to the hit ratio consists in considering the proportion of good predictions with respect to the non-zero predictions, that is $p^+(\theta)/(p^+(\theta)+p^-(\theta))$. This ratio converges asymptotically towards its maximal value, 1. Indeed, from equations~\eqref{eq:p+} and~\eqref{eq:p-} in the proof of Theorem~\ref{th:ternary}, we have $p^+(\theta)+p^-(\theta)=2N(-\theta/a)$ and
$$\begin{array}{ccl}
\frac{p^+(\theta)}{p^+(\theta)+p^-(\theta)} & = & \frac{1}{2} + \frac{1}{2N(-\theta/a)} \int_{\theta/a}^{+\infty}{\left(N\left(\frac{a}{b}u\right)-N\left(-\frac{a}{b}u\right)\right)g(u)du} \\
 & \overset{\theta\rightarrow +\infty}{\sim} & \frac{1}{2} + \frac{1}{2N(-\theta/a)} \int_{\theta/a}^{+\infty}{g(u)du} \\
 & = & \frac{1}{2} + \frac{1}{2N(-\theta/a)} \left(1-N(\theta/a)\right) \\
 & = & 1.
\end{array}$$ 
In a financial perspective, this accuracy metric is not suitable, because maximizing it leads to selecting $\theta=+\infty$ and thus to discarding all the predictions.

In order to penalize high values of the zero forecast probability, on can consider the alternative ratio $p^+(\theta)/(p^+(\theta)+p^-(\theta)+p^0(\theta))=p^+(\theta)$, but it is monotonically decreasing in $\theta$, as stated by Proposition~\ref{pro:derivP}, so that the optimal threshold would be $\theta=0$, whatever $H$. This solution thus does not address the problem introduced in this section about discarding the less certain predictions.

Alternatively, $p^+(\theta)-p^-(\theta)$ is a signed forecast accuracy metric. This metric corresponds to the average number of coins earned in a game in which one earns one coin for a good forecast, loses one for a bad one, and there is no coin transaction if one restrains from playing this forecasting game. According to Proposition~\ref{pro:derivP}, this ratio is monotonically decreasing in $\theta$, like the previous ratio.

All the natural extensions of the hit ratio we can think of are inappropriate insofar as they lead to an optimal threshold $\theta$ either equal to 0 or to infinity. We propose in the next subsection an alternative accuracy metric more consistent with a financial application. Indeed, all the extended hit ratios proposed above only rely on probabilities of making a good, a bad, or a small forecast. In particular, they do not consider the amplitude of the predicted price returns. In an investment perspective, forecasting properly large price returns is though more fruitful than forecasting properly small ones. 

\subsection{Optimal threshold and risk-adjusted performance}\label{sec:seuilOptim}

We now consider the financial performance corresponding to a prediction. A trader is more eager to forecast accurately large price variations than small ones. Therefore, the trader's objective is not minimizing an MSE or maximizing a hit ratio, but maximizing instead a risk-adjusted performance of an investment based on these predictions. A proper evaluation of the predictor must indeed weight the 0-1 contribution of each prediction, which appears in the hit ratio, with the amplitude of the prediction. 

The trading strategy we study here is the ternary strategy exposed above: buying a fixed quantity of the asset if one predicts a positive price return, short selling the same quantity if one predicts a negative return, not investing if the expected return of the asset is below the threshold $\theta$ in absolute value. The trading frequency of the strategy is equal to $h>0$, the forecast horizon. Using the same notations as above and given an amount of money hold by the trader and totally invested in this strategy, the return at horizon $h$ will thus be:
$$R^{\text{strat}}_{t,t+h}(\theta)=\left\{\begin{array}{cl}
0 & \text{if } f_{\theta}\left(\hat R_{t,t+h|\Delta}|\right)=0 \\
R_{t,t+h} & \text{if } f_{\theta}\left(\hat R_{t,t+h|\Delta}|\right)>0 \\
-R_{t,t+h} & \text{if } f_{\theta}\left(\hat R_{t,t+h|\Delta}|\right)<0. 
\end{array}\right.$$
This return represents the performance to be maximized. At time $t$, one only knows an expectation of this return: 
\begin{equation}\label{eq:perfTernary}
\widetilde R_{t,t+h}(\theta)=\E[R^{\text{strat}}_{t,t+h}(\theta)].
\end{equation}

Regarding the risk statistic of this investment, we consider a lower absolute semi-deviation risk measure. It is defined as the average absolute deviation below zero of the return of the strategy, in other words it is the average loss:
\begin{equation}\label{eq:riskTernary}
\widetilde \sigma_{t,t+h}^-(\theta)=-\E\left[\min(0,R^{\text{strat}}_{t,t+h}(\theta))\right].
\end{equation}
Such a risk measure is less widespread in the asset management industry than the volatility, but it is more relevant since it does not incorporate positive deviations which, by definition, are not risky. This risk measure appears for example at the denominator of the kappa ratio of order one, which is a general risk-performance ratio including the popular  Sharpe ratio as a particular case~\cite{VanHarlow}.

Theorem~\ref{th:ternaryFin} provides a formula for $\widetilde R_{t,t+h}(\theta)$ and for $\widetilde \sigma_{t,t+h}^-(\theta)$ when the log-prices follow an fBm.

\begin{thm}\label{th:ternaryFin}
Let $X_t$ be an fBm of parameters $H\in(0,1/2)\cup(1/2,1)$ and $\sigma>0$. Let $h>0$ and $\hat R_{t,t+h|\Delta}$ be a predictor of $R_{t,t+h}$ based on the vector $S$ of past returns, as defined by equation~\eqref{eq:r_hat}. Given $\theta\geq 0$, the average return of the ternary strategy, $\widetilde R_{t,t+h}(\theta)$, and the corresponding lower absolute semi-deviation risk measure, $\widetilde \sigma^-_{t,t+h}(\theta)$, defined in equations~\eqref{eq:perfTernary} and~\eqref{eq:riskTernary}, are equal to:
$$\widetilde R_{t,t+h}(\theta)= 2ag\left(\frac{\theta}{a}\right)$$
and
$$\widetilde \sigma^-_{t,t+h}(\theta)= - 2aN\left(-\frac{\theta}{b}\right)g\left(\frac{\theta}{a}\right) + \sqrt{\frac{2}{\pi}}\sigma h^{H}N\left(-\theta\sqrt{\frac{1}{a^2}+\frac{1}{b^2}}\right),$$
where $a= \sqrt{\Sigma_{RS}\Sigma_S^{-1}\Sigma_{RS}^T}$, $b=\sqrt{\sigma^2h^{2H} - \Sigma_{RS}\Sigma_S^{-1}\Sigma_{RS}^T}$, and $g$ is the Gaussian probability density function.
\end{thm}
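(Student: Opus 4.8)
The plan is to exploit the Gaussian structure: reduce $R^{\text{strat}}_{t,t+h}(\theta)$ to a functional of two \emph{independent} centered Gaussian variables, then carry out explicit one- and two-dimensional Gaussian integrals. First I would record the joint law of $(\hat R_{t,t+h|\Delta},R_{t,t+h})$. Both are linear combinations of increments of the fBm, hence jointly Gaussian with mean zero. From $\hat R_{t,t+h|\Delta}=\Sigma_{RS}\Sigma_S^{-1}S$ together with $\E\{SS^T\}=\Sigma_S$ and $\E\{R_{t,t+h}S^T\}=\Sigma_{RS}$ (Proposition~\ref{pro:NP_incr}), one gets $\Var(\hat R_{t,t+h|\Delta})=\Sigma_{RS}\Sigma_S^{-1}\Sigma_{RS}^T=a^2$, $\Cov(\hat R_{t,t+h|\Delta},R_{t,t+h})=\Sigma_{RS}\Sigma_S^{-1}\Sigma_{RS}^T=a^2$ as well, and $\Var(R_{t,t+h})=\sigma^2h^{2H}=a^2+b^2$. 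Consequently the prediction error $\varepsilon:=R_{t,t+h}-\hat R_{t,t+h|\Delta}$ is uncorrelated with $\hat R_{t,t+h|\Delta}$ and, being jointly Gaussian, independent of it, with $\varepsilon\sim N(0,b^2)$ while $\hat R_{t,t+h|\Delta}\sim N(0,a^2)$. I would also note that $f_\theta(\hat R_{t,t+h|\Delta})$ is $>0$, $<0$, $=0$ according to whether $\hat R_{t,t+h|\Delta}\geq\theta$, $\leq-\theta$, $|\hat R_{t,t+h|\Delta}|<\theta$, so that $R^{\text{strat}}_{t,t+h}(\theta)=\mathrm{sgn}(\hat R_{t,t+h|\Delta})\,R_{t,t+h}\,\indic_{\{|\hat R_{t,t+h|\Delta}|\geq\theta\}}$.

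For the expected return I would substitute $R_{t,t+h}=\hat R_{t,t+h|\Delta}+\varepsilon$. The $\varepsilon$-contribution factors as $\E\{\mathrm{sgn}(\hat R_{t,t+h|\Delta})\indic_{\{|\hat R_{t,t+h|\Delta}|\geq\theta\}}\}\,\E\{\varepsilon\}=0$ by independence and centering; the remaining term is $\E\{|\hat R_{t,t+h|\Delta}|\,\indic_{\{|\hat R_{t,t+h|\Delta}|\geq\theta\}}\}=2\int_\theta^{\infty}\tfrac{x}{a}g(x/a)\,dx=2a\,g(\theta/a)$, using the substitution $u=x/a$ and $ug(u)=-g'(u)$. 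This yields $\widetilde R_{t,t+h}(\theta)=2a\,g(\theta/a)$.

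For the lower semi-deviation, only the loss event $\{R^{\text{strat}}_{t,t+h}(\theta)<0\}$ contributes, and by the sign symmetry of $(\hat R_{t,t+h|\Delta},\varepsilon)$ it suffices to treat the branch $\hat R_{t,t+h|\Delta}\geq\theta$ and double. Conditioning on $\hat R_{t,t+h|\Delta}=z\geq\theta$ and writing $R_{t,t+h}=z+\varepsilon$, I would compute $\E\{(z+\varepsilon)\indic_{\{\varepsilon<-z\}}\}=zN(-z/b)-b\,g(z/b)$ (again via $ug(u)=-g'(u)$), and then integrate $-2$ times this quantity against $\tfrac1a g(z/a)$ over $z\in[\theta,\infty)$. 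The $zN(-z/b)$ piece is handled by parts with antiderivative $v=-a\,g(z/a)$, producing the boundary term $aN(-\theta/b)g(\theta/a)$ plus an integral of $g(z/a)g(z/b)$; collecting everything leaves $\widetilde\sigma^-_{t,t+h}(\theta)=-2aN(-\theta/b)g(\theta/a)+2\,\tfrac{a^2+b^2}{ab}\int_\theta^{\infty}g(z/a)g(z/b)\,dz$. Since $g(z/a)g(z/b)=\tfrac1{2\pi}\exp\!\big(-\tfrac{z^2}{2}(a^{-2}+b^{-2})\big)$, setting $c^{-2}=a^{-2}+b^{-2}$ gives $\int_\theta^{\infty}g(z/a)g(z/b)\,dz=\tfrac{c}{\sqrt{2\pi}}N(-\theta/c)$, and the identity $\tfrac{a^2+b^2}{ab}\,c=\sqrt{a^2+b^2}=\sigma h^{H}$ turns this into the claimed $-2aN(-\theta/b)g(\theta/a)+\sqrt{2/\pi}\,\sigma h^{H}\,N\big(-\theta\sqrt{a^{-2}+b^{-2}}\big)$.

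The main obstacle is the semi-deviation computation: one must peel off the loss event correctly, use the $(\hat R_{t,t+h|\Delta},\varepsilon)\mapsto(-\hat R_{t,t+h|\Delta},-\varepsilon)$ symmetry to avoid handling two branches separately, and then recognize that the integration by parts together with the collapse of the Gaussian product conspire to produce exactly the prefactor $\sqrt{a^2+b^2}=\sigma h^{H}$. Everything else is routine once the independence of $\hat R_{t,t+h|\Delta}$ and $\varepsilon$ and the orthogonality identity $\Cov(\hat R_{t,t+h|\Delta},R_{t,t+h})=\Var(\hat R_{t,t+h|\Delta})=a^2$ are in place; the hypothesis $H\neq1/2$ guarantees $a>0$ so that all the quotients $\theta/a$ make sense.
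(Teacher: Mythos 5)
Your proposal is correct and follows essentially the same route as the paper: your decomposition $R_{t,t+h}=\hat R_{t,t+h|\Delta}+\varepsilon$ with independent $\hat R_{t,t+h|\Delta}\sim\mathcal N(0,a^2)$ and $\varepsilon\sim\mathcal N(0,b^2)$ is exactly the paper's Cholesky representation $[\hat R_{t,t+h|\Delta}\ R_{t,t+h}]^T=\Sigma[U\ V]^T$, and your inline integration by parts plus the Gaussian product collapse reproduce the paper's Lemma~\ref{lem:intTer} and final simplification via $a^2+b^2=\sigma^2h^{2H}$. The only cosmetic difference is that you invoke the $(\hat R_{t,t+h|\Delta},\varepsilon)\mapsto(-\hat R_{t,t+h|\Delta},-\varepsilon)$ symmetry to treat one branch and double, where the paper keeps both branches and merges them by a change of variable.
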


The proof of Theorem~\ref{th:ternaryFin} is postponed in Appendix~\ref{sec:ternaryFin}.

In the case where $\theta=0$, that is when the trader does not discard any forecast, the performance and the risk of the strategy are simplified:
$$\left\{\begin{array}{ccl}
\widetilde R_{t,t+h}(0) & = & a\sqrt{\frac{2}{\pi}} \\
\widetilde \sigma^-_{t,t+h}(0) & = & \frac{\sigma h^H-a}{\sqrt{2\pi}},
\end{array}\right.$$
in which we recognize, for $\widetilde R_{t,t+h}(0)$, the expected value of $a|U|$, where $U\sim\mathcal N(0,1)$. Both $\widetilde R_{t,t+h}(\theta)$ and $\widetilde \sigma^-_{t,t+h}(\theta)$ are monotonically decreasing in $\theta$: discarding predictions tends to diminish both the expected gain and the risk. At the limit, when $\theta\rightarrow +\infty$, both $\widetilde R_{t,t+h}(\theta)$ and $\widetilde \sigma^-_{t,t+h}(\theta)$ tend to zero: the threshold is so high that no investment decision is made by the trader.

In order to find a good balance between performance and risk, we define an expected risk-adjusted performance, which simply consists in the expected return of the strategy, penalized by the corresponding risk:
\begin{equation}\label{eq:riskAdj}
\widetilde R_{\lambda}(\theta)=\widetilde R_{t,t+h}(\theta)-\lambda\widetilde \sigma_{t,t+h}^-(\theta).
\end{equation}
In equation~\eqref{eq:riskAdj}, we omit the subscript in $t$ and $t+h$ for the risk-adjusted performance $\widetilde R_{\lambda}(\theta)$, in order not to overload the notations. But one has to keep in mind that $\widetilde R_{\lambda}(\theta)$ depends on $h$. The parameter $\lambda$ in equation~\eqref{eq:riskAdj} plays the role of a risk aversion when its value is positive. A negative value of $\lambda$ would indicate a risk-seeking behaviour. In particular, if $\lambda=-1$, $\widetilde R_{\lambda}(\theta)$ is simply the upper absolute semi-deviation, $\E[\max(0,R^{\text{strat}}_{t,t+h}(\theta))]$. In what follows, we focus on the case $\lambda\geq 0$. Because of the particular risk measure we have chosen, the risk-adjusted performance corresponds to the expected performance of the strategy in a distorted probability in which bad outcomes, that is to say negative returns, are overweighted. Indeed,
$$\widetilde R_{\lambda}(\theta)=\int_0^1{Q_{R^{\text{strat}}_{t,t+h}(\theta)}(p)d\mu(p)},$$
where $Q_{Z}(p)$ is the quantile of probability $p$ of a variable $Z$ and $\mu$ is the probability distortion function defined by
$$\mu:x\in[0,1]\longmapsto \left\{\begin{array}{ll}
\frac{(1+\lambda)x}{1+\lambda p^-(\theta)} & \text{if }  x<p^-(\theta) \\
\frac{x + \lambda p^-(\theta)}{1+\lambda p^-(\theta)} & \text{else.}
\end{array}\right.$$
In particular, if $\lambda=1$, the probability distortion doubles the probability of negative returns, relatively to the probability of positive returns.

Figure~\ref{fig:Ternary_PenalizedR} displays the risk-adjusted performance for various values of risk aversion $\lambda$ and threshold $\theta$. Since both the expected return and the risk measure tend toward zero when $\theta\rightarrow+\infty$, $\widetilde R_{\lambda}(\theta)$ also tends toward zero for high threshold values. When $\lambda=0$, the maximum risk-adjusted performance is reached for $\theta=0$. For $\lambda>0$, we observe a global maximum for a value $\theta>0$. This threshold is important insofar as it constitutes an optimal threshold for a trader whose aim would be the maximization of the risk-adjusted performance:
$$\theta^{\star}_{\lambda}=\underset{\theta\geq 0}{\text{argmax}}\ \widetilde R_{\lambda}(\theta).$$

\begin{figure}[htbp]
	\centering
		\includegraphics[width=0.6\textwidth]{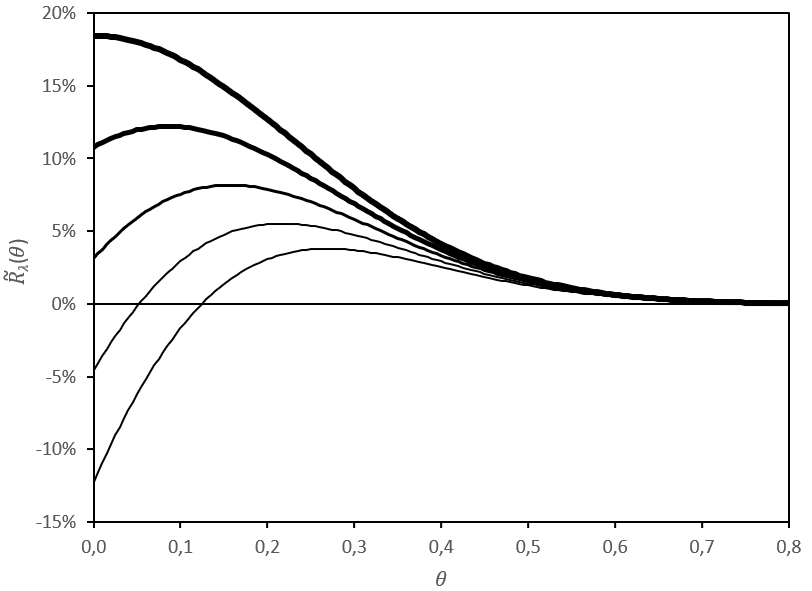} 
\begin{minipage}{0.7\textwidth}\caption{Theoretical penalized expected return $\widetilde R_{\lambda}(\theta)$ as a function of the threshold $\theta$, for various values of risk aversion $\lambda$, namely, from fatter to thinner, 0, 0.25, 0.5, 0.75, 1. The fBm has the parameters $H=0.65$ and $\sigma=1$. The forecast horizon is $h=1$ and the number of lagged returns in the predictor is $n=1$.}
	\label{fig:Ternary_PenalizedR}
\end{minipage}
\end{figure}

Figure~\ref{fig:Ternary_Lambda} illustrates the link between $\lambda$ and $\theta^{\star}_{\lambda}$. The higher the risk aversion $\lambda$, the higher the optimal threshold $\theta^{\star}_{\lambda}$. Moreover, a higher threshold means that the decision not to invest is more frequent. Figure~\ref{fig:Ternary_Lambda} thus shows $p^0(\theta^{\star}_{\lambda})$ as an increasing function of $\lambda$.

\begin{figure}[htbp]
	\centering
		\includegraphics[width=0.49\textwidth]{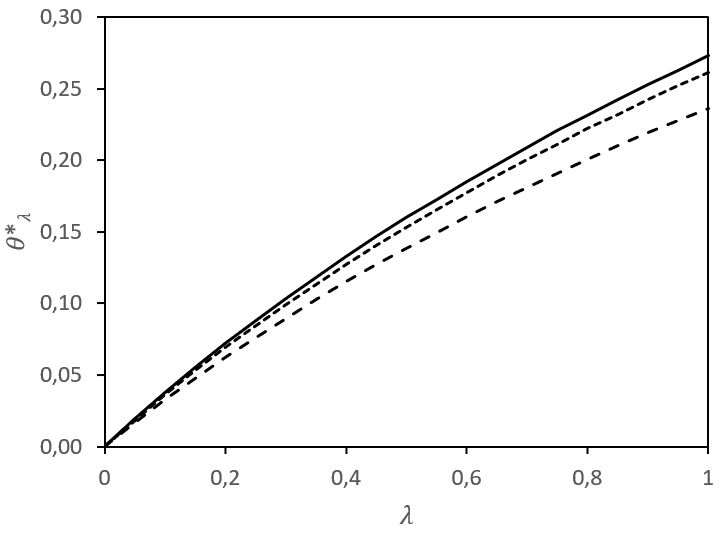} 
		\includegraphics[width=0.49\textwidth]{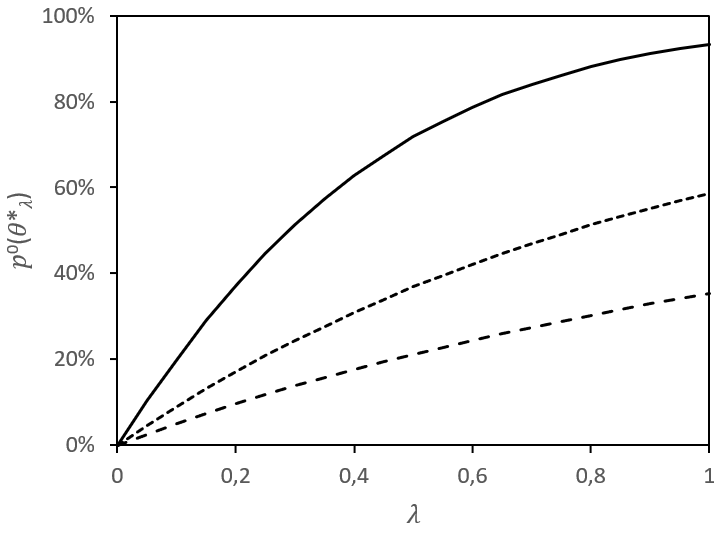} 
\begin{minipage}{0.7\textwidth}\caption{Theoretical optimal threshold $\theta^{\star}_{\lambda}$ (left) and corresponding non-conditional zero forecast probability $p^{0}(\theta^{\star}_{\lambda})$ (right) as functions of the risk aversion $\lambda$. The fBm has the parameters $\sigma=1$ and, for $H$, respectively from the top curve to the bottom curve, 0.6, 0.7, and 0.8. The forecast horizon is $h=1$ and the number of lagged returns in the predictor is $n=1$.}
	\label{fig:Ternary_Lambda}
\end{minipage}
\end{figure}

We also observe that the optimal threshold strongly depends on the value of the Hurst exponent, as displayed in Figure~\ref{fig:Ternary_H}. In particular, it is asymmetric with respect to $H=1/2$. Whatever the risk aversion, the highest optimal thresholds are reached for Hurst exponents close to $1/2$, but the optimal threshold decreases toward 0 when $H$ is close to 1, whereas a threshold significantly different from 0 is indicated when $H$ is close to 0. This shows a relatively higher difficulty to make financially performing predictions when $H<1/2$ than when $H>1/2$. Figure~\ref{fig:Ternary_H} also shows $p^0(\theta^{\star}_{\lambda})$ as a function of the Hurst exponent. This zero forecast probability reaches huge values when $H$ is close to $1/2$. For instance, for the risk aversion considered, $\lambda=0.1$, $p^0(\theta^{\star}_{\lambda})$ is twice as large for $H=0.55$ ($40\%$) than for $H=0.6$ ($20\%$).

\begin{figure}[htbp]
	\centering
		\includegraphics[width=0.49\textwidth]{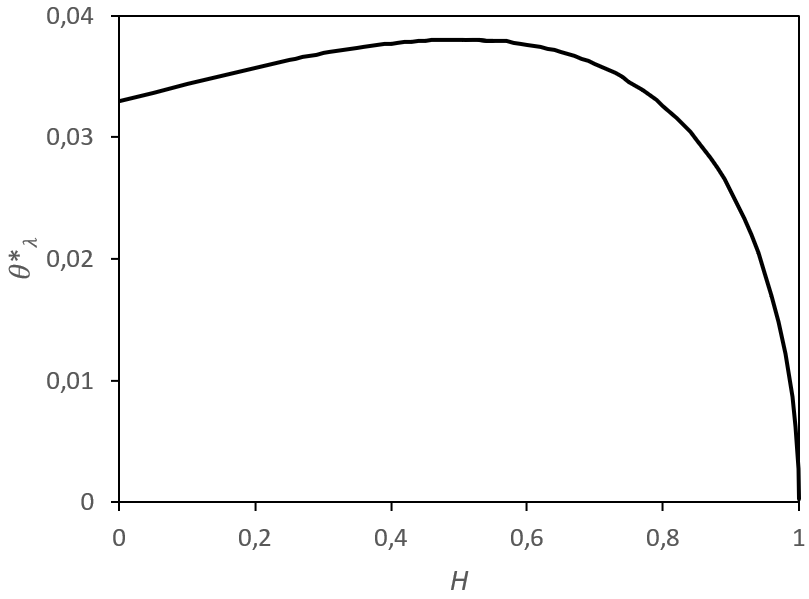} 
		\includegraphics[width=0.49\textwidth]{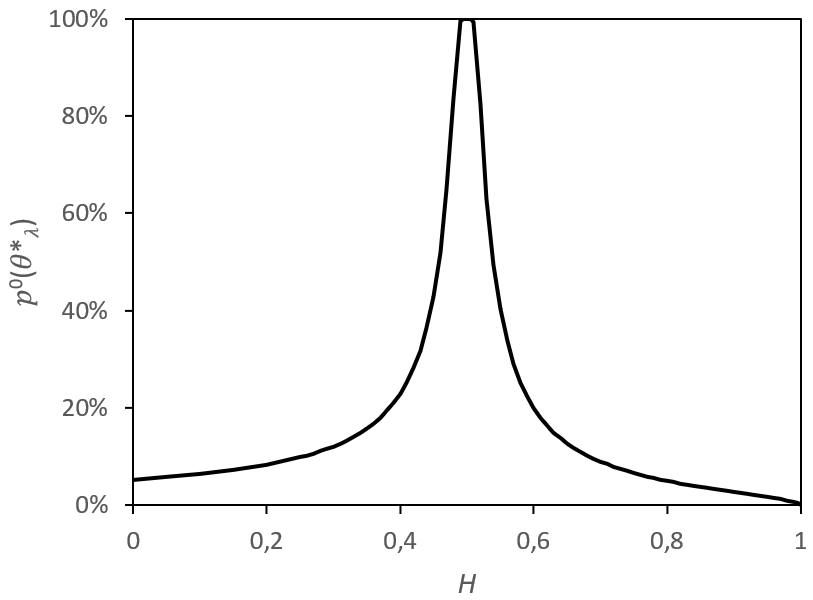} 
\begin{minipage}{0.7\textwidth}\caption{Theoretical optimal threshold $\theta^{\star}_{\lambda}$ (left) and corresponding non-conditional zero forecast probability $p^{0}(\theta^{\star}_{\lambda})$ (right) as functions of the Hurst exponent $H$. The fBm has a volatility parameter $\sigma=1$. The risk aversion is $\lambda=0.1$, the forecast horizon is $h=1$, and the number of lagged returns in the predictor is $n=1$.}
	\label{fig:Ternary_H}
\end{minipage}
\end{figure}

Guasoni and co-authors provide an fBm-based risk-adjusted performance which differs from our approach for several reasons: their risk is a variance, whereas ours is a lower absolute semi-deviation, and they consider trading in continuous time, whereas we more realistically work in discrete time~\cite{GNR,GMR}. They notice an asymmetry of their risk-adjusted performance with respect to $H=1/2$. While the shape of our metric is obviously different from theirs, we also obtain an asymmetry of our risk-adjusted performance, as illustrated by Figure~\ref{fig:RiskAdj}: the worst risk-adjusted performance is for $H=1/2$ and the improvement of this metric is stronger for $H>1/2$ than for $H<1/2$. In the case where $n=1$ and with an optimal lag $\delta_1=h$, we obtain, thanks to equation~\eqref{eq:beta1}, 
$$\begin{array}{ccl}
a & = & \sqrt{\Sigma_{RS}\Sigma_{S}^{-1}\Sigma_{RS}^T} \\
 & = & \sqrt{\beta_1^2\sigma^2\delta_1^{2H}} \\
 & = & \left|\beta_1\sigma\delta_1^H\right| \\
 & = & \sigma h^H\left|2^{2H-1}-1\right|,
\end{array}$$
and finally the risk-adjusted performance without thresholding small predictions is equal to
$$\begin{array}{ccl}
\widetilde R_{\lambda}(0) & = & a\left(\sqrt{\frac{2}{\pi}}+\frac{\lambda}{\sqrt{2\pi}}\right) - \frac{\lambda\sigma h^H}{\sqrt{2\pi}} \\
 & = & \frac{\sigma h^H}{\sqrt{2\pi}}\left[\left|2^{2H-1}-1\right|\left(2+\lambda\right) - \lambda\right].
\end{array}$$
We see in Figure~\ref{fig:RiskAdj} that introducing the optimal thresholding tends to set a floor at 0 for the risk-adjusted performance. As a consequence, when $H$ is close to $1/2$, $\widetilde R_{\lambda}(\theta^{\star}_{\lambda})$ is close to the performance of the strategy without risk adjustment or thresholding, $\widetilde R_{0}(0)$. In this case, the thresholding thus tends to erase the risk. However, when $H$ is close to 0 or 1, $\widetilde R_{\lambda}(\theta^{\star}_{\lambda})$ is close to $\widetilde R_{\lambda}(0)$ and the effect of the thresholding is not significant. 

\begin{figure}[htbp]
	\centering
		\includegraphics[width=0.49\textwidth]{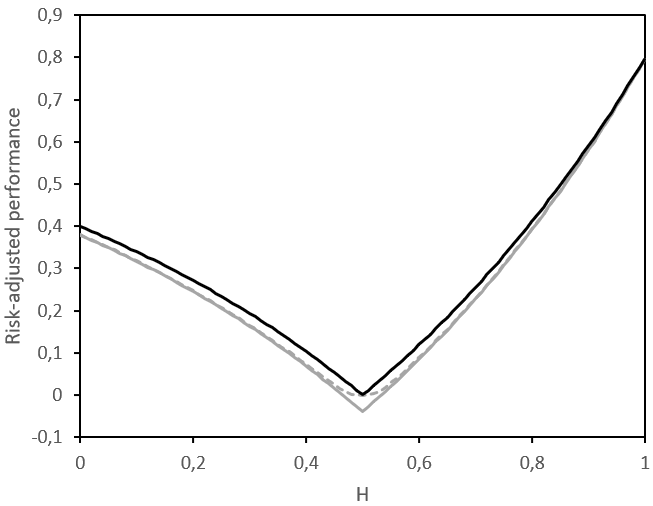} 
		\includegraphics[width=0.49\textwidth]{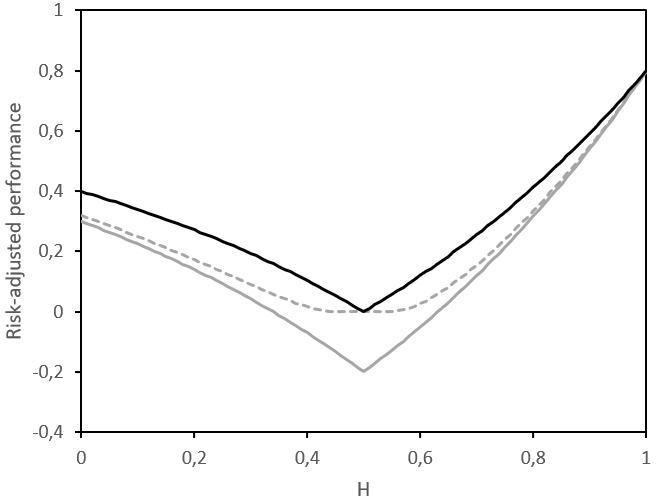} 
\begin{minipage}{0.7\textwidth}\caption{Theoretical penalized expected return $\widetilde R_{0}(0)$ (black line), $\widetilde R_{\lambda}(0)$ (continuous grey line), and $\widetilde R_{\lambda}(\theta^{\star}_{\lambda})$ (dotted grey line), as a function of the Hurst exponent, for $\lambda=0.1$ (left) and $\lambda=0.5$ (right). The fBm has a volatility parameter $\sigma=1$, the forecast horizon is $h=1$ and the number of lagged returns in the predictor is $n=1$.}
	\label{fig:RiskAdj}
\end{minipage}
\end{figure}

\section{Empirical application}\label{sec:empiric}

We now apply the theoretical results exposed above to two kinds of samplings of financial time series: daily sampling of various financial dynamics and high-frequency prices of foreign exchange rates. With daily sampled data, we simply check whether the fBm is a relevant model and we compare the forecasting performance of this model with an autoregressive (AR) model. However, the daily sampling does not make it possible to optimize the time lags in the forecasting procedure as detailed in Section~\ref{sec:optim}. For this purpose, we then use intraday prices and analyse the extent to which the systematic strategy exposed in Section~\ref{sec:statarb} is performing.

\subsection{Daily sampling: fBm vs autoregressive model}\label{sec:rough}

We want to see that the fBm is appropriate for various financial series with a daily time step. We first briefly focus on series of prices of some assets and then we present a more promising extension to series of realized volatilities.

\subsubsection{Daily series of prices}

Regarding the price series, we consider two stock indices, the CAC 40 index and the S\&P 500 index (SPX), as well as one FX rate, the GBPUSD. The series of daily prices starts in January 2000 for the two stock indices and in December 2003 for the GBPUSD. They all end the 12th April 2021.

We want to determine if the fBm-based predictor introduced in Proposition~\ref{pro:NP_incr} is accurate for the time series of the logarithm of the prices, compared to a predictor based on an AR model. We want to make a one-day forecast using daily increments of log-prices, namely $n$ lagged observations, for $n$ varying between one up to six days. In particular, we do not try to optimize the time lags, contrary to the developments of Section~\ref{sec:optim}, in order to make a fair comparison between fBm and AR models. For each value of $n$, the fBm-based forecast is the weighted sum of past increments, as in equation~\eqref{eq:r_hat}. The weights depend both on $n$ and on $H$, the Hurst exponent. For this purpose, we need to estimate a Hurst exponent at each date. We use an estimator exploiting the scaling properties of the fBm in a rolling window of size $T$, without regularizing the obtained time-varying Hurst exponent~\cite{Coeur2005,BP2010,Garcin2017}\footnote{ Precisely, in our empirical analysis, $T=504$ business days.}. More precisely, if we note $p_t$ the series of log-prices, supposed to follow an fBm of parameters $H$ and $\sigma$ in a specific sub-sample, any increment $p_t-p_{t-\tau}$ of duration $\tau>0$ contained in the sub-sample has a variance $\sigma^2|\tau|^{2H}$. Therefore, comparing the empirical log-variance of increments of two distinct durations, $\tau_1$ and $\tau_2$, leads to a simple estimator of the Hurst exponent using data in the interval $[t-T,t]$:
\begin{equation}\label{eq:estim}
\widehat H_t=\frac{1}{2\log(\tau_1/\tau_2)}\log\left(\frac{(T-\tau_2)\sum_{i=0}^{T-\tau_1}{(p_{t-i}-p_{t-i-\tau_1})^2}}{(T-\tau_1)\sum_{i=0}^{T-\tau_2}{(p_{t-i}-p_{t-i-\tau_2})^2}}\right).
\end{equation}

For the three daily series, the fBm-based forecast outperforms the forecast with the AR model, if we consider information criteria, such as AIC or BIC. However, the hit ratio for both methods is very close to $50\%$, whatever the lag. We assess the significance of the hit ratio with respect to $50\%$ through a binomial test. If we consider three one-day time lags, the hit ratio is $51.5\%$ (significantly above $50\%$ with a confidence of $98.3\%$) for the fBm and $51.8\%$ (conf. $99.4\%$) for the AR model, for the SPX series. For the CAC series, the hit ratio is $52.1\%$ (conf. $99.8\%$) for the fBm and $50.8\%$ (conf. $86.7\%$) for the AR model. For the GBPUSD series, the signal is not significantly different from noise, with a hit ratio of $50.2\%$ for the fBm and $50.1\%$ for the AR model. We will see in Section~\ref{sec:FX} how such low hit ratios can be improved by optimizing the time lags and thus by considering intraday data. However, some financial time series lead to high hit ratios even with a daily sampling. It is the case for example of the series of realized volatilities, which we study in the next section.

\subsubsection{Rough volatility}

The use of the fBm in finance is not limited to log-prices and one of the most popular applications of the fBm is for volatility modelling. Even though accuracy metrics directly related to the performance and risk of a trading strategy, as developed in Section~\ref{sec:seuilOptim}, are not appropriate in this case, forecasting a volatility is also useful in algorithmic trading, and one can properly use the hit ratio to evaluate the quality of the fBm-based forecast. If the idea of depicting the volatility with the fBm is not new~\cite{CR}, the recent paradigm is rough volatility, that is the fBm used for modelling volatility has a very low Hurst exponent, close to 0.15, so that the series of volatility is strongly antipersistent~\cite{ALV,GJR}. This model seems to currently dominate the research on stochastic volatility and leads to a rich literature, regarding for example some simplifications of the model to make it computationally more efficient~\cite{AbiJaber,AE}, or some refinements to make it more realistic~\cite{GG}. We also stress the fact that the rough framework relies on empirical observations of scaling rules of volatility increments and not on the analysis of the dependence between distant increments of volatility. This analysis of the long-range dependence seems to reveal a slowly decaying correlation~\cite{CV}, so that modelling volatility with a Hurst exponent higher than $1/2$ may also be relevant in certain cases~\cite{GS}.

The data used in our analysis are daily realized volatilities computed with a five-minute discretization of prices, imported from the Oxford-Man Institute of Quantitative Finance Realized Library
\footnote{ Available at \url{http://realized.oxford-man.ox.ac.uk/data/download}.}. We focus on the realized volatility of eight stock indices: the AEX index, the CAC 40 index, the FTSE 100 index, the Nasdaq 100 index (IXIC), the Nikkei 225 index (N225), the Oslo Exchange All-share index (OSEAX), the Madrid General index (SMSI), and the S\&P 500 index. The series starts on January 2000, except N225, which starts in February 2000, OSEAX in September 2001 and SMSI in July 2005. The end date of our sample is on the 12th April 2021.

Like for the series of prices above, we want to compare fBm-based predictions of log-volatilities with the AR model. The Hurst exponent, for the fBm approach, is estimated like in equation~\eqref{eq:estim}, in which the log-price $p_t$ is to be replaced by the logarithm of the realized volatility. We display the estimated time-varying Hurst exponent of the realized log-volatility of SPX in Figure~\ref{fig:HurstSPX}. We observe that, consistently with the literature on rough volatility, $\widehat H_t$ is mainly in the interval $[0.05,0.25]$.

\begin{figure}[htbp]
	\centering
		\includegraphics[width=0.75\textwidth]{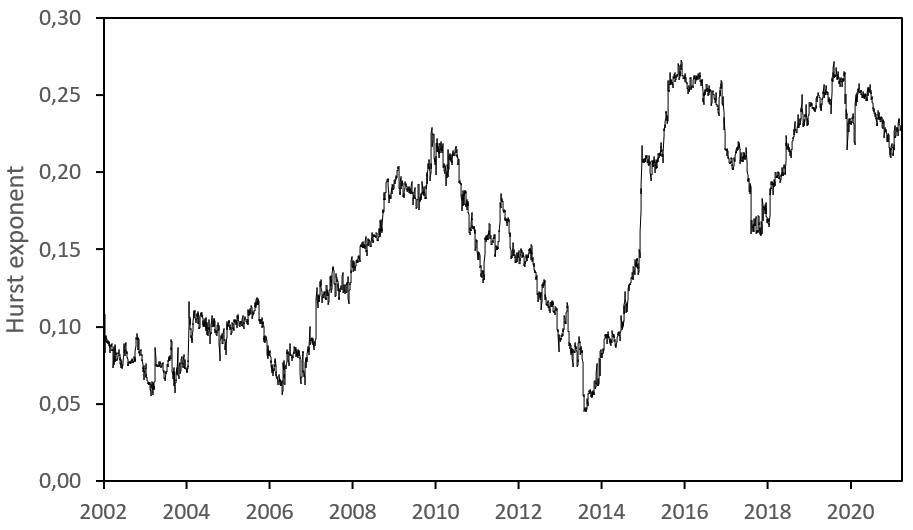} 
\begin{minipage}{0.7\textwidth}\caption{Estimated time-varying Hurst exponent of the realized log-volatility of SPX.}
	\label{fig:HurstSPX}
\end{minipage}
\end{figure}

Before presenting the accuracy of the forecast with the fBm approach for the log-volatilities series, we give more details about the benchmark alternative predictor. The alternative approach is an AR model with Gaussian residuals and with the same number $n$ of lagged daily increments of log-volatility as in the fBm predictor. The two models are estimated in the same rolling windows.  The fBm and the AR models are very close since both are Gaussian and make linear forecast. The only difference is that the weights are not calculated in the same way. For the AR model, each weight is a parameter. When increasing $n$, we add new parameters, which will maximize the in-sample accuracy. For the fBm approach, the $n$ weights all depend on the parameters $H$ and $\sigma$ of the fBm, so that increasing the number of lags will not increase the number of parameters. This last model is thus more parsimonious and less subject to overfitting than the AR approach. In other words, we expect the AR model to be more accurate in sample and the fBm to have better forecasting ability, at least when $n$ is big enough. The accuracy metric we present below will confirm this intuition.

We determine the likelihood of each model in sample. Logically, this likelihood, calculated on the estimation set, is higher for the AR model. But this is not a relevant indicator of forecasting ability. One often prefers using information criteria, such as AIC and BIC, which take into account, negatively, the log-likelihood and, positively, the number of parameters as well as, for the BIC, the number of data used for the estimation. Since the fBm model has less parameters than the AR model, it is less penalized and we can expect to have a lower AIC and BIC for it. 

Empirical results confirm the intuition in the sense that the fBm predictor has always a lower AIC when $n=6$, as one can see in Table~\ref{tab:AIC}\footnote{ The results with BIC, not displayed in a table, are similar.}. But the results are even stronger. Indeed, when $n=1$, that is when the variation of log-volatility is forecast only from the increment observed the last day, the fBm predictor has a slight advantage over the AR one, except for IXIC. We recall that the value of the information criterion cannot be interpreted in absolute value but that it is only useful for comparisons: for a same dataset, the best model is the one with the lowest criterion. For SPX, it seems from Figure~\ref{fig:SPX_accuracy}, that the optimal number of lagged observations for the AR predictor is $n=4$, at least regarding the BIC, whereas it is $n\geq 6$ for the fBm predictor. Indeed, adding more lagged observations does not increase the number of parameters only in the fBm predictor.

\begin{table}[htbp]
\centering
\begin{tabular}{l|cccc}
\hline
index & fBm(1) & fBm(6) & AR(1) & AR(6) \\
\hline
AEX & 875.9	& \textbf{812.9}	& 877.7	& 823.6 \\
CAC 40 & 795.0	& \textbf{734.8}	& 796.7	& 742.1 \\
FTSE 100 & 886.4	& \textbf{826.8}	& 888.0	& 837.7  \\
IXIC & 920.1	& \textbf{856.6}	& 916.8	& 861.2 \\
N225 & 770.9	& \textbf{716.8}	& 772.8	& 727.4 \\
OSEAX & 937.5	& \textbf{838.2}	& 939.1	& 845.5 \\
SMSI & 1025.4	& \textbf{931.1}	& 1027.4	& 940.6 \\
SPX & 933.6	& \textbf{873.9}	& 935.4	& 880.9 \\
\hline
\end{tabular}
\begin{minipage}{0.7\textwidth}\caption{AIC for the eight volatility series, for four different predictors: the fBm and the AR model both with 1 and 6 lagged observations. In bold is the lowest AIC for each series.}
\label{tab:AIC}
\end{minipage}
\end{table}

\begin{figure}[htbp]
	\centering
		\includegraphics[width=0.4\textwidth]{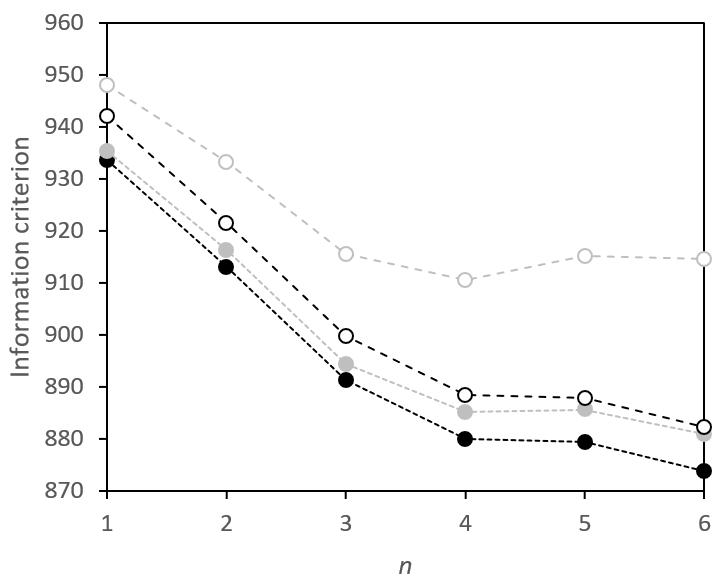}
		\includegraphics[width=0.4\textwidth]{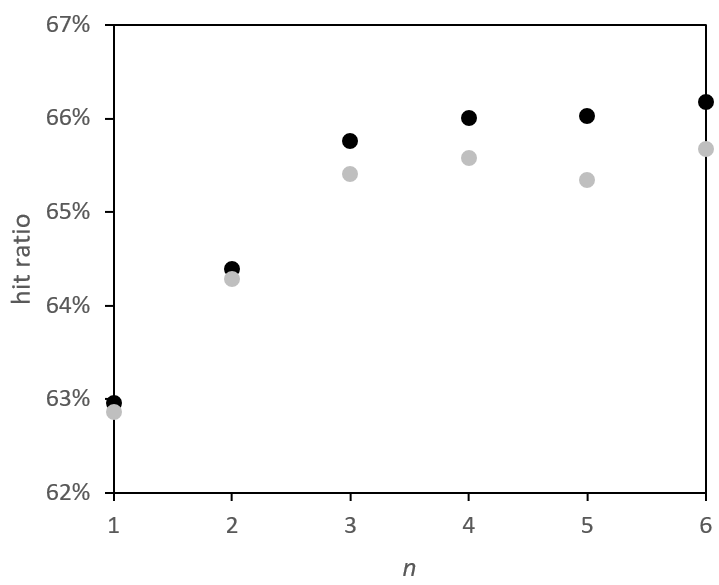} 
\begin{minipage}{0.7\textwidth}\caption{On the left: AIC (filled dots) and BIC (empty dots) for the fBm predictor (black) and the AR one (grey). On the right: out-of-sample hit ratio for the fBm predictor (black) and the AR one (grey). The number of lagged observations in the input of the two predictors is $n$ and the forecast series is the log-volatility of SPX.}
	\label{fig:SPX_accuracy}
\end{minipage}
\end{figure}

Information criteria provide in sample a hint of which model will have the higher out-of-sample accuracy. In order to check this intuition, we determine the empirical out-of-sample hit ratio for each model. Results for $n=1$ and $n=6$ are gathered in Table~\ref{tab:hit}. For $n=1$, the hit ratio is always higher for the fBm predictor than for the AR one, except for AEX index. For $n=6$, the fBm is always more accurate than any other model tested in the table. More details are displayed in Figure~\ref{fig:SPX_accuracy} for the SPX. It shows that the difference of accuracy between fBm and AR predictors becomes stronger from $n=3$. For both models, adding lagged observation has a strong impact on the accuracy until $n=3$. Our recommendation would thus be to use the fBm predictor with at least three lagged observations.

\begin{table}[htbp]
\centering
\begin{tabular}{l|cccc}
\hline
index & fBm(1) & fBm(6) & AR(1) & AR(6) \\
\hline
AEX & 62.4\% &	\textbf{65.8\%} &	62.5\% &	65.7\% \\
CAC 40 & 63.0\% &	\textbf{66.8\%} &	62.8\% &	66.0\% \\
FTSE 100 & 64.8\% &	\textbf{67.5\%} &	64.3\% &	67.4\% \\
IXIC & 62.2\% &	\textbf{64.6\%} &	62.1\% &	64.3\% \\
N225 & 62.9\% &	\textbf{67.4\%} &	62.8\% &	67.1\% \\
OSEAX & 64.0\% &	\textbf{67.9\%} &	63.8\% &	67.6\% \\
SMSI & 61.6\% &	\textbf{64.0\%} &	61.4\% &	63.7\% \\
SPX & 63.0\% &	\textbf{66.2\%} &	62.9\% &	65.7\% \\
\hline
\end{tabular}
\begin{minipage}{0.7\textwidth}\caption{Empirical out-of-sample hit ratio for the eight volatility series, for four different predictors: the fBm and the AR model both with 1 and 6 lagged observations. In bold is the highest hit ratio for each series.}
\label{tab:hit}
\end{minipage}
\end{table}

This analysis of realized volatility series shows that the fBm provides a good predictor, which is more accurate than an AR model because of its parsimony. It confirms previous findings in the literature, where the superiority of the fBm over the AR model was assessed with the help of another accuracy indicator, namely the MSE~\cite{GJR}. It highlights the relevance of the fBm in finance in a forecasting perspective.

\subsection{High-frequency FX rates}\label{sec:FX}

We now focus on time series of FX prices for three pairs: EURGBP, EURUSD, and GBPUSD. We consider high-frequency observations, with one price every minute. The FX market is continuously open during almost one week. We thus focus on one week in order to limit the effects of closing markets: from the 23rd June 2019 to the 28th June 2019.

We transform the price series in a series of log-prices and we assume it follows an fBm. We do not investigate here the relevance of the fBm for modelling FX log-prices, because many papers already deal with this question~\cite{BG,SGV,Garcin2017} and also propose some extensions of the fBm to match other stylized facts such as stationarity~\cite{GarcinLamperti}, high kurtosis~\cite{WBMW,GarcinMPRE,AG}, multiscaling~\cite{DiMatteo2007}, or even randomness of the Hurst exponent~\cite{BPP,GarcinMPRE}. We understand that several additional parameters may be relevant to perfectly depict the dynamic of FX log-prices, but, if our reference is the standard Brownian motion, the use of the fBm is a good step towards a realistic representation of this financial series.

Using a 12-hour window and equation~\eqref{eq:estim}, we estimate the Hurst exponent every minute for each series, as represented in Figure~\ref{fig:HurstFX}. Then, we forecast the evolution of the log-price at a one-hour horizon. We base this forecast on the formulas provided in Section~\ref{sec:forecast}. In particular, we take into account two lagged returns in each forecast. We propose either a naive choice of these lagged returns, or an \textit{optimal} one. The naive choice consists in the one-hour and two-hour past returns. Alternatively, the \textit{optimal} choice of lags is theoretically non-conditionally optimal\footnote{ This \textit{optimal} threshold is not to be confused with an ex-post optimal threshold.} and depends on the Hurst exponent, as exposed in Section~\ref{sec:optim}. For example, when $H=0.35$, the lagged returns considered are the last 21-minute and 176-minute returns.

\begin{figure}[htbp]
	\centering
		\includegraphics[width=0.75\textwidth]{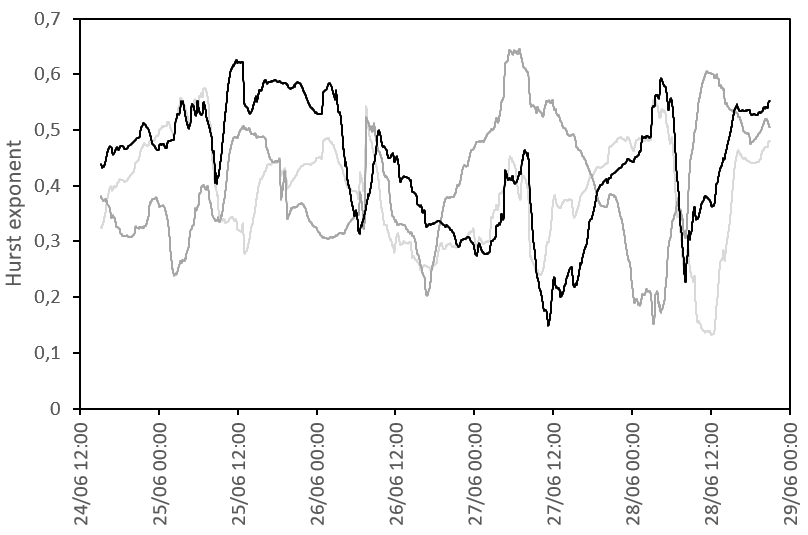} 
\begin{minipage}{0.7\textwidth}\caption{Estimated time-varying Hurst exponent of the log-prices of EURGBP (black), EURUSD (dark grey), GBPUSD (light grey).}
	\label{fig:HurstFX}
\end{minipage}
\end{figure}

As displayed in Table~\ref{tab:FX}, the hit ratios for the EURUSD pair are very close between the naive and the \textit{optimal} approaches. But for the two other pairs, the \textit{optimal} lags outperform the naive lags, for which the hit ratio is close to $50\%$. It thus sounds that using carefully the lags in the predictor can transform an equiprobable coin toss in a performing strategy. It is worth noting that increasing the number of lags, and not only choosing them properly, should also increase the hit ratio and make this predictor more appealing for practitioners.

\begin{table}[htbp]
\centering
\begin{tabular}{c|l|cccccc}
\hline
FX rate & Method & Return & Risk & Risk-adj. return & $\widehat{p^+}$ & $\widehat{p^-}$ & $\widehat{\ p^0\ }$ \\
\hline
 & Naive lags, $\theta=0$ & -0.51 & 2.47 & -2.97 & 49.8\% & 50.2\% & 0.0\% \\
EURGBP & \textit{Optimal} lags, $\theta=0$ & 3.66 & 2.26 & 1.40 & 52.9\% & 47.1\% & 0.0\% \\
 & \textit{Optimal} lags and $\theta$ & 3.09 & 1.50 & 1.59 & 30.2\% & 26.7\% & 43.1\% \\
\hline
 & Naive lags, $\theta=0$ & 2.82 & 1.85 & 0.97 & 54.6\% & 45.4\% & 0.0\% \\
EURUSD & \textit{Optimal} lags, $\theta=0$ & 1.52 & 1.91 & -0.39 & 54.3\% & 45.7\% & 0.0\% \\
 & \textit{Optimal} lags and $\theta$ & 1.37 & 1.20 & 0.17 & 34.4\% & 29.6\% & 36.0\% \\
\hline 
 & Naive lags, $\theta=0$ & 1.97 & 2.42 & -0.45 & 50.6\% & 49.4\% & 0.0\% \\
GBPUSD & \textit{Optimal} lags, $\theta=0$ & 4.22 & 2.31 & 1.91 & 52.4\% & 47.6\% & 0.0\% \\
 & \textit{Optimal} lags and $\theta$ & 4.58 & 1.63 & 2.96 & 33.5\% & 28.8\% & 37.7\% \\
\hline
\end{tabular}
\begin{minipage}{0.7\textwidth}\caption{Empirical average one-hour return, risk (lower absolute semi-deviation), and risk-adjusted return (with $\lambda=0.1$) multiplied respectively by $10^{5}$, $10^{4}$, and $10^{5}$. The three columns on the right are the corresponding empirical probabilities of good forecast, bad forecast, and zero forecast, following Definition~\ref{def:probaTernary}. As soon as $\theta=0$, $\widehat{p^+}$ is simply the hit ratio.}
\label{tab:FX}
\end{minipage}
\end{table}

We then implement the ternary strategy described in Section~\ref{sec:statarb}. We consider a threshold $\theta$ either equal to zero or to the theoretically optimal value determined numerically in Section~\ref{sec:seuilOptim} for a given risk aversion. Whatever the FX pair, we observe in Table~\ref{tab:FX} a sharp decrease of the risk when using the \textit{optimal} $\theta$. For EURGBP and GBPUSD, it also leads to the highest risk-adjusted performance. 

The EURUSD case is particular. We noticed the close hit ratios for the two kinds of lags, but the naive approach in fact leads to a better risk-adjusted performance. Three causes can explain this superiority of the naive approach for this FX pair. First, the \textit{optimal} choice of lags is intended to maximize a theoretical non-conditional hit ratio, not a risk-adjusted performance. Second, the non-conditional framework used to define optimal lags works in average, but in some cases a conditional approach may be preferable. Third and last, our formulas rely on a model, the fBm, which is only a simplification of reality, so that a wise practical implementation of the proposed predictor should constantly check the relevance of the model and challenge it with other approaches in order to minimize model risk. 

Nevertheless for the \textit{optimal} lags, we observe in Table~\ref{tab:FX}, that introducing the \textit{optimal} threshold instead of $\theta=0$ improves the risk-adjusted performance of the strategy for the three pairs, including EURUSD. This was precisely the purpose of the thresholding. We also note that the \textit{optimal} threshold leads not to take an investment decision roughly $40\%$ of the time.

\section{Conclusion}\label{sec:conclu}

In this paper, we have provided theoretical formulas for accuracy metrics related to forecasts of the fBm in discrete time. We focused on metrics that we consider to be meaningful for a systematic trader, namely the hit ratio, the average performance and risk. We have used these theoretical expressions to derive optimal time lags and an optimal threshold under which the prediction is to be discarded. Finally, we have demonstrated empirically the practical relevance of all these considerations on two kinds of time series: realized volatilities and high-frequency FX rates. We think that systematic traders should find in this contribution some elements for improving their strategies based on the fBm.

\bibliographystyle{plain}
\bibliography{biblioFracForec}

\begin{thebibliography}{10}

\bibitem{AbiJaber}
E.~Abi~Jaber.
\newblock Lifting the {H}eston model.
\newblock {\em Quantitative finance}, 19(12):1995--2013, 2019.

\bibitem{AE}
E.~Abi~Jaber and O.~El~Euch.
\newblock Multifactor approximation of rough volatility models.
\newblock {\em {SIAM} journal on financial mathematics}, 10(2):309--349, 2019.

\bibitem{ALV}
E.~Al\`os, J.A. Le\'on, and J.~Vives.
\newblock On the short-time behavior of the implied volatility for
  jump-diffusion models with stochastic volatility.
\newblock {\em Finance and stochastics}, 11(4):571--589., 2007.

\bibitem{ARAR}
J.~Alvarez-Ramirez, J.~Alvarez, E.~Rodriguez, and G.~Fernandez-Anaya.
\newblock Time-varying {H}urst exponent for {US} stock markets.
\newblock {\em Physica {A}: statistical mechanics and its applications},
  387(24):6159--6169, 2008.

\bibitem{AG}
A.~Ammy-Driss and M.~Garcin.
\newblock Efficiency of the financial markets during the {COVID}-19 crisis:
  time-varying parameters of fractional stable dynamics.
\newblock {\em arxiv preprint}, 2020.

\bibitem{BR}
F.M. Bandi and J.R. Russell.
\newblock Microstructure noise, realized variance, and optimal sampling.
\newblock {\em The review of economic studies}, 75(2):339--369, 2008.

\bibitem{BSV}
C.~Bender, T.~Sottinen, and E.~Valkeila.
\newblock Arbitrage with fractional {B}rownian motion?
\newblock {\em Theory of stochastic processes}, 13(1):23--34, 2007.

\bibitem{BP2010}
S.~Bianchi and A.~Pantanella.
\newblock Pointwise regularity exponents and market cross-correlations.
\newblock {\em International review of business research papers}, 6(2):39--51,
  2010.

\bibitem{BPP}
S.~Bianchi, A.~Pantanella, and A.~Pianese.
\newblock Modeling and simulation of currency exchange rates using
  multifractional process with random exponent.
\newblock {\em International journal of modeling and optimization},
  2(3):309--314, 2012.

\bibitem{BP}
S.~Bianchi and A.~Pianese.
\newblock Time-varying {H}urst-{H}\"older exponents and the dynamics of
  (in)efficiency in stock markets.
\newblock {\em Chaos, solitons \& fractals}, 109:64--75, 2018.

\bibitem{BG}
M.~Bohdalov\'a and M.~Gregu\v{s}.
\newblock Fractal analysis of forward exchange rates.
\newblock {\em Acta polytechnica hungarica}, 7(4):57--69, 2010.

\bibitem{CT}
D.O. Cajueiro and B.M. Tabak.
\newblock The {H}urst exponent over time: testing the assertion that emerging
  markets are becoming more efficient.
\newblock {\em Physica {A}: statistical mechanics and its applications},
  336(3-4):521--537, 2004.

\bibitem{Cheridito}
P.~Cheridito.
\newblock Arbitrage in fractional {B}rownian motion models.
\newblock {\em Finance and stochastics}, 7(4):533--553, 2003.

\bibitem{CV}
A.~Chronopoulou and F.G. Viens.
\newblock Estimation and pricing under long-memory stochastic volatility.
\newblock {\em Annals of finance}, 8(2-3):379--403, 2012.

\bibitem{Coeur2005}
J.-F. Coeurjolly.
\newblock Identification of multifractional {B}rownian motion.
\newblock {\em Bernoulli}, 11(6):987--1008, 2005.

\bibitem{CR}
F.~Comte and E.~Renault.
\newblock Long memory in continuous-time stochastic volatility models.
\newblock {\em Mathematical finance}, 8(4):291--323, 1998.

\bibitem{Cont}
R.~Cont.
\newblock Long range dependence in financial markets.
\newblock In J.~L\'evy-V\'ehel and E.~Lutton, editors, {\em Fractals in
  engineering}, pages 159--179. Springer, 2005.

\bibitem{DiMatteo2007}
T.~Di~Matteo.
\newblock Multi-scaling in finance.
\newblock {\em Quantitative finance}, 7(1):21--36, 2007.

\bibitem{DiMatteo2003}
T.~Di~Matteo, T.~Aste, and M.M. Dacorogna.
\newblock Scaling behaviors in differently developed markets.
\newblock {\em Physica {A}: statistical mechanics and its applications},
  324(1-2):183--188, 2003.

\bibitem{DiMatteo2005}
T.~Di~Matteo, T.~Aste, and M.M. Dacorogna.
\newblock Long-term memories of developed and emerging markets: {U}sing the
  scaling analysis to characterize their stage of development.
\newblock {\em Journal of banking \& finance}, 29(4):827--851, 2005.

\bibitem{ECOJ}
C.~Eom, S.~Choi, G.~Oh, and W.S. Jung.
\newblock Hurst exponent and prediction based on weak-form efficient market
  hypothesis of stock markets.
\newblock {\em Physica {A}: statistical mechanics and its applications},
  387(18):4630--4636, 2008.

\bibitem{Garcin2017}
M.~Garcin.
\newblock Estimation of time-dependent {H}urst exponents with variational
  smoothing and application to forecasting foreign exchange rates.
\newblock {\em Physica {A}: statistical mechanics and its applications},
  483:462--479, 2017.

\bibitem{GarcinLamperti}
M.~Garcin.
\newblock Hurst exponents and delampertized fractional {B}rownian motions.
\newblock {\em International journal of theoretical and applied finance},
  22(5):1950024, 2019.

\bibitem{GarcinEstimLamp}
M.~Garcin.
\newblock A comparison of maximum likelihood and absolute moments for the
  estimation of {H}urst exponents in a stationary framework.
\newblock {\em arxiv preprint}, 2020.

\bibitem{GarcinMPRE}
M.~Garcin.
\newblock Fractal analysis of the multifractality of foreign exchange rates.
\newblock {\em Mathematical methods in economics and finance}, 13-14(1):49--73,
  2020.

\bibitem{GG}
M.~Garcin and M.~Grasselli.
\newblock Long vs short time scales: the rough dilemma and beyond.
\newblock {\em arxiv preprint}, 2020.

\bibitem{GS}
J.~Garnier and K.~S{\o}lna.
\newblock Option pricing under fast-varying long-memory stochastic volatility.
\newblock {\em Mathematical finance}, 29(1):39--83, 2019.

\bibitem{GJR}
J.~Gatheral, T.~Jaisson, and M.~Rosenbaum.
\newblock Volatility is rough.
\newblock {\em Quantitative finance}, 18(6):933--949, 2018.

\bibitem{GSP}
M.S. Granero, J.T. Segovia, and J.G. P\'erez.
\newblock Some comments on {H}urst exponent and the long memory processes on
  capital markets.
\newblock {\em Physica {A}: statistical mechanics and its applications},
  387(22):5543--5551, 2008.

\bibitem{GMR}
P.~Guasoni, Y.~Mishura, and M.~R\'asonyi.
\newblock High-frequency trading with fractional {B}rownian motion.
\newblock {\em Finance and stochastics}, 25:277--310, 2021.

\bibitem{GNR}
P.~Guasoni, Z.~Nika, and M.~R\'asonyi.
\newblock Trading fractional {B}rownian motion.
\newblock {\em {SIAM} journal on financial mathematics}, 10(3):769--789, 2019.

\bibitem{Kristoufek}
L.~Kristoufek.
\newblock On {B}itcoin markets (in)efficiency and its evolution.
\newblock {\em Physica {A}: statistical mechanics and its applications},
  503:257--262, 2018.

\bibitem{KV13}
L.~Kristoufek and M.~Vosvrda.
\newblock Measuring capital market efficiency: {G}lobal and local correlations
  structure.
\newblock {\em Physica {A}: statistical mechanics and its applications},
  392(1):184--193, 2013.

\bibitem{KV16}
L.~Kristoufek and M.~Vosvrda.
\newblock Gold, currencies and market efficiency.
\newblock {\em Physica {A}: statistical mechanics and its applications},
  449:27--34, 2016.

\bibitem{MvN}
B.~Mandelbrot and J.~van Ness.
\newblock Fractional {B}rownian motions, fractional noises and applications.
\newblock {\em {SIAM} review}, 10(4):422--437, 1968.

\bibitem{NP}
C.J. Nuzman and H.V. Poor.
\newblock Linear estimation of self-similar processes via {L}amperti's
  transformation.
\newblock {\em Journal of applied probability}, 37(2):429--452, 2000.

\bibitem{Oomen}
R.C.A. Oomen.
\newblock Properties of realized variance under alternative sampling schemes.
\newblock {\em Journal of business \& economic statistics}, 24(2):219--237,
  2006.

\bibitem{Rogers}
L.C.G. Rogers.
\newblock Arbitrage with fractional {B}rownian motion.
\newblock {\em Mathematical finance}, 7(1):95--105, 1997.

\bibitem{SGV}
D.~Surgailis, G.~Teyssi\`ere, and M.~Vai\v{c}iulis.
\newblock The increment ratio statistic.
\newblock {\em Journal of multivariate analysis}, 99(3):510--541, 2008.

\bibitem{VanHarlow}
W.~Van~Harlow.
\newblock Asset allocation in a downside-risk framework.
\newblock {\em Financial analysts journal}, 47(5):28--40, 1991.

\bibitem{WBMW}
A.~Weron, K.~Burnecki, S.~Mercik, and K.~Weron.
\newblock Complete description of all self-similar models driven by {L}\'evy
  stable noise.
\newblock {\em Physical review {E}}, 71(1):016113, 2005.

\end{thebibliography}

\appendix

\section{Lemmas used for the proof of the theorems}\label{sec:lemma}

Let $g$ and $N$ be respectively the probability density function and the cumulative distribution function of a standard Gaussian variable. We introduce the following lemmas.

\begin{lem}\label{lem:hit}
For all $\alpha\in\mathbb R$, we have $\int_0^{\infty}{N(\alpha x)g(x)dx}=\frac{1}{4}+\frac{1}{2\pi}\arctan(\alpha)$.
\end{lem}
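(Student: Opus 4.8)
The plan is to set $I(\alpha)=\int_0^{\infty}N(\alpha x)g(x)\,dx$ and identify $I$ by differentiating with respect to the parameter $\alpha$. First I would check that differentiation under the integral sign is legitimate: the integrand $N(\alpha x)g(x)$ is smooth in $\alpha$ with $\partial_\alpha\big(N(\alpha x)g(x)\big)=x\,g(\alpha x)g(x)$, and on any compact $\alpha$-range this is dominated by $x g(x)$, which is integrable on $(0,\infty)$; so the dominated convergence theorem applies.

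The key computational step is then the evaluation of $I'(\alpha)$. Using $g(u)=\tfrac{1}{\sqrt{2\pi}}e^{-u^2/2}$ one gets the clean identity
$$g(\alpha x)g(x)=\frac{1}{2\pi}e^{-(1+\alpha^2)x^2/2},$$
so that
$$I'(\alpha)=\int_0^{\infty}x\,g(\alpha x)g(x)\,dx=\frac{1}{2\pi}\int_0^{\infty}x\,e^{-(1+\alpha^2)x^2/2}\,dx=\frac{1}{2\pi(1+\alpha^2)},$$
the last integral being an elementary substitution $v=x^2$. Integrating in $\alpha$ yields $I(\alpha)=\frac{1}{2\pi}\arctan(\alpha)+C$ for some constant $C$.

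It remains to pin down $C$, which I would do by evaluating at $\alpha=0$: since $N(0)=\tfrac12$, $I(0)=\tfrac12\int_0^{\infty}g(x)\,dx=\tfrac14$, hence $C=\tfrac14$ and $I(\alpha)=\tfrac14+\tfrac1{2\pi}\arctan(\alpha)$, as claimed. There is no real obstacle here — the only point needing a line of care is the justification of differentiating under the integral; everything else is a one-line Gaussian computation. (As an alternative one could argue probabilistically: $I(\alpha)=\proba[X>0,\ Z\le\alpha X]$ for independent standard Gaussians $X,Z$, and by rotational invariance of $(X,Z)$ this equals the angular measure of the wedge $\{x>0,\ z\le\alpha x\}$ divided by $2\pi$, namely $\frac{\pi/2+\arctan\alpha}{2\pi}$; I would keep the analytic proof as the main one since it is shorter to write out.)
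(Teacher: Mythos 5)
Your proof is correct and follows essentially the same route as the paper: differentiate $\alpha\mapsto\int_0^\infty N(\alpha x)g(x)\,dx$ under the integral sign, evaluate the derivative as $\frac{1}{2\pi(1+\alpha^2)}$ via the Gaussian product identity, integrate to get the $\arctan$, and fix the constant at $\alpha=0$. The extra care you take in justifying the interchange of differentiation and integration (and the optional rotational-invariance argument) are welcome additions but do not change the substance.
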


\begin{proof}
Let $f(\alpha)=\int_0^{\infty}{N(\alpha x)g(x)dx}$. Then $f$ is differentiable and
$$\begin{array}{ccl}
f'(\alpha) & = & \int_0^{\infty}{xg(\alpha x)g(x)dx} \\
 & = & \int_0^{\infty}{\frac{x}{2\pi}\exp\left(-\frac{x^2}{2}(1+\alpha^2)\right)dx} \\
 & = & \left[-\frac{1}{2\pi(1+\alpha^2)}\exp\left(-\frac{x^2}{2}(1+\alpha^2)\right)\right]^{\infty}_{x=0} \\
 & = & \frac{1}{2\pi(1+\alpha^2)}.
\end{array}$$
As a consequence, $f$ can be written in the form $f(\alpha)=\gamma+\frac{1}{2\pi}\arctan(\alpha)$, with $\gamma$ a constant. The constant $\gamma$ is equal to $f(0)=1/4$. This proves the lemma.
\end{proof}

\begin{lem}\label{lem:intBis}
For all $\alpha,a\in\mathbb R$, we have the following Taylor expansion in the neighbourhood of $a=0$:
$$\int_a^{\infty}{N(\alpha x)g(x)dx}=\frac{1}{4}+\frac{1}{2\pi}\arctan(\alpha)-\frac{a}{2\sqrt{2\pi}}-\frac{\alpha a^2}{4\pi}+\frac{a^3}{12\sqrt{2\pi}}+\frac{(\alpha^3+3\alpha)a^4}{48\pi}-\frac{a^5}{80\sqrt{2\pi}} +\mathcal O (a^6).$$
\end{lem}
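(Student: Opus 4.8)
The plan is to split the integral at the origin and reuse Lemma~\ref{lem:hit}. Concretely, write
$$\int_a^{\infty}{N(\alpha x)g(x)dx}=\int_0^{\infty}{N(\alpha x)g(x)dx}-\int_0^{a}{N(\alpha x)g(x)dx},$$
and replace the first term on the right-hand side by $\frac14+\frac{1}{2\pi}\arctan(\alpha)$ thanks to Lemma~\ref{lem:hit}. It then remains to Taylor expand the map $a\mapsto\int_0^a N(\alpha x)g(x)dx$ in the neighbourhood of $a=0$, and to subtract.

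For that second step I would expand the integrand at $x=0$. Using the entire series $g(x)=\frac{1}{\sqrt{2\pi}}\left(1-\frac{x^2}{2}+\frac{x^4}{8}-\mathcal O(x^6)\right)$ and $N(\alpha x)=\frac12+\frac{1}{\sqrt{2\pi}}\left(\alpha x-\frac{\alpha^3 x^3}{6}+\mathcal O(x^5)\right)$, the product $N(\alpha x)g(x)$ is itself entire, so it can be multiplied out and collected by powers of $x$ up to order $x^5$, the cross terms producing the $1/(2\pi)$ prefactors. Since the resulting power series converges uniformly on the compact interval $[0,a]$, it may be integrated term by term, which yields a degree-$5$ polynomial in $a$ with vanishing constant term, up to a $\mathcal O(a^6)$ remainder. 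Subtracting this polynomial from $\frac14+\frac{1}{2\pi}\arctan(\alpha)$ produces exactly the claimed expansion; equivalently, one may recover the same coefficients by repeatedly differentiating $a\mapsto\int_a^{\infty} N(\alpha x)g(x)dx$ under the integral sign and evaluating the derivatives at $a=0$, using Lemma~\ref{lem:hit} for the value at $a=0$.

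I expect the only real work to be bookkeeping: keeping track of the $1/\sqrt{2\pi}$ versus $1/(2\pi)$ factors and, in particular, recovering the $a^4$ coefficient $\frac{\alpha^3+3\alpha}{48\pi}$, which combines the pure-$g$ contribution of order $x^4$ with the cross terms coming from $\alpha x$ times the $x^2$ term of $g$ and from the $\alpha^3 x^3$ term of $N(\alpha x)$ times the constant term of $g$. There is no analytic subtlety beyond the uniform convergence of the power series on $[0,a]$, which legitimates the term-by-term integration.
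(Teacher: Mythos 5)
Your proposal is correct, and it is essentially the paper's argument dressed in slightly different clothes: both anchor the expansion at $a=0$ with Lemma~\ref{lem:hit} and then Taylor-expand to order five. The mechanical difference is that you split off $\int_0^a N(\alpha x)g(x)\,dx$, expand the integrand $N(\alpha x)g(x)$ as a power series in $x$ and integrate term by term, whereas the paper works directly with $f(a)=\int_a^{\infty}N(\alpha x)g(x)\,dx$, computes $f^{(1)},\dots,f^{(5)}$ at $0$ via Leibniz's rule on the product $N(\alpha a)g(a)$ (using the Hermite-polynomial values $g^{(k)}(0)$), and invokes Taylor's theorem — the very alternative you mention at the end of your proposal, so you have in effect both routes. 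Your series route is legitimate (the integrand is entire, so term-by-term integration on $[0,a]$ is unproblematic) and reproduces the stated coefficients: the $x^3$ term of the product is $-\frac{1}{2\pi}\left(\frac{\alpha}{2}+\frac{\alpha^3}{6}\right)x^3$, which after integration and the sign flip gives exactly $\frac{(\alpha^3+3\alpha)a^4}{48\pi}$. One small bookkeeping slip in your narrative: the ``pure-$g$ contribution of order $x^4$'' does not enter the $a^4$ coefficient at all — it integrates to an $a^5$ term (and is what produces $-\frac{a^5}{80\sqrt{2\pi}}$ together with the constant $\frac12\cdot\frac{x^4}{8}$ piece); the $a^4$ coefficient comes solely from the two cross terms you list. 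Since the value you quote is the correct one, this is a harmless misattribution rather than a gap.
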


\begin{proof}
Let $f(a)=\int_a^{\infty}{N(\alpha x)g(x)dx}$. First, getting the derivatives of $g$ is straightforward thanks to Hermite polynomials:
$$\left\{\begin{array}{ccl}
g(0) & = & \frac{1}{\sqrt{2\pi}} \\
g^{(1)}(0) & = & 0 \\
g^{(2)}(0) & = & - \frac{1}{\sqrt{2\pi}} \\
g^{(3)}(0) & = & 0 \\
g^{(4)}(0) & = &  \frac{3}{\sqrt{2\pi}} .
\end{array}\right.$$
The function $f$ is differentiable and, using Leibniz formula, we get:
\begin{enumerate}
\item $f^{(1)}(a)=-N(\alpha a)g(a)$ and $f^{(1)}(0)=-\frac{1}{2\sqrt{2\pi}}$,
\item $f^{(2)}(a)=-\alpha g(\alpha a)g(a)-N(\alpha a)g^{(1)}(a)$ and $f^{(2)}(0)=-\frac{\alpha}{2\pi}$,
\item $f^{(3)}(a)=-\alpha^2 g^{(1)}(\alpha a)g(a)-2\alpha g(\alpha a)g^{(1)}(a)-N(\alpha a)g^{(2)}(a)$ and $f^{(3)}(0)=\frac{1}{2\sqrt{2\pi}}$,
\item $f^{(4)}(a)=-\alpha^3 g^{(2)}(\alpha a)g(a)-3\alpha^2 g^{(1)}(\alpha a)g^{(1)}(a)-3\alpha g(\alpha a)g^{(2)}(a)-N(\alpha a)g^{(3)}(a)$ and $f^{(4)}(0)=\frac{\alpha^3+3\alpha}{2\pi}$,
\item $f^{(5)}(a)=-\alpha^4 g^{(3)}(\alpha a)g(a)-4\alpha^3 g^{(2)}(\alpha a)g^{(1)}(a)-6\alpha^2 g^{(1)}(\alpha a)g^{(2)}(a)-4\alpha g(\alpha a)g^{(3)}(a)-N(\alpha a)g^{(4)}(a)$ and $f^{(5)}(0)=-\frac{3}{2\sqrt{2\pi}}$.
\end{enumerate}
We conclude with Taylor's theorem applied to $f$ in $a=0$, using Lemma~\ref{lem:hit} for $f(0)$.
\end{proof}

\begin{lem}\label{lem:intTer}
For all $\alpha,a\in\mathbb R$, we have 
$$\int_a^{+\infty}{u N(\alpha u) g(u)du}=N(\alpha a)g(a)+\frac{\alpha}{\sqrt{2\pi(1+\alpha^2)}}\left(1-N\left(a\sqrt{1+\alpha^2}\right)\right).$$
\end{lem}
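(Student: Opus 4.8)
The plan is to integrate by parts, exploiting the elementary identity $u\,g(u)=-g'(u)$, which holds because $g(u)=\frac{1}{\sqrt{2\pi}}e^{-u^2/2}$. Writing the integrand as $N(\alpha u)\cdot\bigl(-g'(u)\bigr)$ and taking antiderivative $-g(u)$ for the second factor, one obtains
$$\int_a^{+\infty}{uN(\alpha u)g(u)du}=\Bigl[-N(\alpha u)g(u)\Bigr]_a^{+\infty}+\alpha\int_a^{+\infty}{g(\alpha u)g(u)du}.$$
The first step is thus to dispatch the boundary term: since $g(u)\to 0$ as $u\to+\infty$ while $N(\alpha u)$ stays bounded in $[0,1]$, the contribution at $+\infty$ vanishes, and the lower endpoint leaves $N(\alpha a)g(a)$, which is exactly the first term in the claimed formula.

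The second step is to evaluate $\alpha\int_a^{+\infty}g(\alpha u)g(u)du$. Here I would merge the two Gaussian densities: $g(\alpha u)g(u)=\frac{1}{2\pi}\exp\bigl(-\tfrac{(1+\alpha^2)u^2}{2}\bigr)=\frac{1}{\sqrt{2\pi}}\,g\bigl(u\sqrt{1+\alpha^2}\bigr)$. Performing the change of variable $v=u\sqrt{1+\alpha^2}$ turns the remaining integral into $\frac{\alpha}{\sqrt{2\pi(1+\alpha^2)}}\int_{a\sqrt{1+\alpha^2}}^{+\infty}g(v)dv=\frac{\alpha}{\sqrt{2\pi(1+\alpha^2)}}\bigl(1-N(a\sqrt{1+\alpha^2})\bigr)$, which is precisely the second term. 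Combining the two contributions yields the lemma.

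There is no real obstacle here: the argument is a single integration by parts followed by a scaling substitution. The only point requiring a word of care is the justification that the boundary term at $+\infty$ is zero (and, implicitly, that all integrals converge absolutely, which is immediate from the Gaussian tail of $g$); everything else is a routine manipulation of the standard normal density and distribution function.
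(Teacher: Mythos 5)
Your proof is correct and follows essentially the same route as the paper: integration by parts using $u\,g(u)=-g'(u)$, followed by merging the two Gaussian densities and a scaling substitution (the paper works on $[a,b]$ and lets $b\to+\infty$, whereas you handle the boundary at $+\infty$ directly, which is an immaterial difference). No gaps.
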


\begin{proof}
By noting that $-g$ is a primitive function of $u\mapsto ug(u)$, we integrate by part the following equation:
$$\begin{array}{ccl}
\int_a^{b}{u N(\alpha u) g(u)du} & = & \left[-N(\alpha u)g(u)\right]_a^b+\alpha\int_a^b{g(\alpha u)g(u)du} \\
& = & -N(\alpha b)g(b)+N(\alpha a)g(a)+\frac{\alpha}{\sqrt{2\pi}}\int_a^b{g\left(u\sqrt{1+\alpha^2}\right)du} \\
& = & -N(\alpha b)g(b)+N(\alpha a)g(a)+\frac{\alpha}{\sqrt{2\pi(1+\alpha^2)}}\left(N\left(b\sqrt{1+\alpha^2}\right)-N\left(a\sqrt{1+\alpha^2}\right)\right)
\end{array}$$
after a change of variable between the second and the third line. By doing $b\rightarrow+\infty$, we get Lemma~\ref{lem:intTer}.
\end{proof}

\section{Proof of Theorem~\ref{th:hitNrdt}}\label{sec:hitNrdt}

\begin{proof}
We treat separately the conditional and the non-conditional hit ratios. We begin with the non-conditional case.
\begin{enumerate}
\item According to equation~\eqref{eq:r_hat}, the predicted price return is the random variable
$$\hat R_{t,t+h|\Delta}=\Sigma_{RS}\Sigma_S^{-1}S.$$
Its mean is zero and its variance is:
$$\begin{array}{ccl}
\Var\left(\hat R_{t,t+h|\Delta}\right) & = & \E\left[\Sigma_{RS}\Sigma_S^{-1}S\left(\Sigma_{RS}\Sigma_S^{-1}S\right)^T\right] \\
& = & \Sigma_{RS}\Sigma_S^{-1}\E\left[SS^T\right]\left(\Sigma_S^T\right)^{-1}\Sigma_{RS}^T \\
& = & \Sigma_{RS}\Sigma_S^{-1}\Sigma_S\Sigma_S^{-1}\Sigma_{RS}^T \\
& = & \Sigma_{RS}\Sigma_S^{-1}\Sigma_{RS}^T ,
\end{array}$$
where we have used the fact that $\Sigma_S$ is symmetric. The covariance of $\hat R_{t,t+h|\Delta}$ with $R_{t,t+h}$ is:
$$\begin{array}{ccl}
\Cov(\hat R_{t,t+h|\Delta},R_{t,t+h}) & = & \E\left[\Sigma_{RS}\Sigma_S^{-1}SR_{t,t+h}\right] \\
& = & \Sigma_{RS}\Sigma_S^{-1}\E\left[SR_{t,t+h}\right] \\
& = & \Sigma_{RS}\Sigma_S^{-1}\Sigma_{RS}^T.
\end{array}$$
Then, the covariance matrix $\Gamma$ of the Gaussian vector $[\hat R_{t,t+h|\Delta} \ R_{t,t+h}]^T$ is:
$$\Gamma=\left(\begin{array}{cc}
 \Sigma_{RS}\Sigma_S^{-1}\Sigma_{RS}^T & \Sigma_{RS}\Sigma_S^{-1}\Sigma_{RS}^T \\
 \Sigma_{RS}\Sigma_S^{-1}\Sigma_{RS}^T & \sigma^2h^{2H}
\end{array}\right),$$
which we can also write $\Gamma=\Sigma\Sigma^T$, with $\Sigma$ a lower triangular matrix obtained by Cholesky decomposition:
$$\Sigma=\left(\begin{array}{cc}
a & 0 \\
 a & b
\end{array}\right),$$
with $a= \sqrt{\Sigma_{RS}\Sigma_S^{-1}\Sigma_{RS}^T}$ and $b=\sqrt{\sigma^2h^{2H} - \Sigma_{RS}\Sigma_S^{-1}\Sigma_{RS}^T}$. Therefore, there exists two Gaussian variables $U,V\sim\mathcal N (0,1)$, independent from each other, such that 
\begin{equation}\label{eq:CholUV}
[\hat R_{t,t+h|\Delta} \ R_{t,t+h}]^T=\Sigma [U\ V]^T.
\end{equation}
The non-conditional hit ratio can thus be written as a sum of joint probabilities of the two increments $\hat R_{t,t+h|\Delta}$ and $R_{t,t+h}$, which can be expressed as a combination of $U$ and $V$:
$$\begin{array}{ccl}
\rho & = & \proba\left[\hat R_{t,t+h|\Delta}\geq 0,R_{t,t+h}\geq 0\right]+\proba\left[\hat R_{t,t+h|\Delta}\leq 0,R_{t,t+h}\leq 0\right] \\
 & = & \proba\left[aU\geq 0,aU+bV\geq 0\right]+\proba\left[aU\leq 0,aU+bV\leq 0\right] \\
 & = & \proba\left[U\geq 0,  V\geq -(a/b) U\right] + \proba\left[U\leq 0,  V\leq -(a/b) U\right] \\
 & = & \int_{0}^{+\infty}{\int_{-au/b}^{+\infty}{(g(u)g(v))dv}du} + \int_{-\infty}^{0}{\int_{-\infty}^{-au/b}{(g(u)g(v))dv}du} \\
 & = & \int_{0}^{+\infty}{\left(1-N\left(-\frac{a}{b}u\right)\right)g(u)du} + \int_{-\infty}^{0}{N\left(-\frac{a}{b}u\right)g(u)du} \\
 & = & \frac{1}{2} - \int_{0}^{+\infty}{N\left(-\frac{a}{b}u\right)g(u)du} + \int_{0}^{+\infty}{N\left(\frac{a}{b}u\right)g(u)du} \\
 & = & \frac{1}{2} - \frac{1}{4}-\frac{1}{2\pi}\arctan\left(-\frac{a}{b}\right) + \frac{1}{4}+\frac{1}{2\pi}\arctan\left(\frac{a}{b}\right) \\
 & = & \frac{1}{2}+\frac{1}{\pi}\arctan\left(\frac{a}{b}\right).
\end{array}$$
where $g$ and $N$ are respectively the probability density function and the cumulative distribution function of a standard Gaussian variable and where we used Lemma~\ref{lem:hit}, which is in Appendix~\ref{sec:lemma}. Moreover, we note that
$$\frac{a}{b} =  \frac{\sqrt{\Sigma_{RS}\Sigma_S^{-1}\Sigma_{RS}^T}}{\sqrt{\sigma^2h^{2H} - \Sigma_{RS}\Sigma_S^{-1}\Sigma_{RS}^T}} = \left(\frac{\sigma^2h^{2H}}{\Sigma_{RS}\Sigma_S^{-1}\Sigma_{RS}^T}-1\right)^{-1/2},$$
so that
$$\rho = \frac{1}{2}+\frac{1}{\pi}\arctan\left(\left[\frac{\sigma^2h^{2H}}{\Sigma_{RS}\Sigma_S^{-1}\Sigma_{RS}^T}-1\right]^{-1/2}\right) = 1-\frac{1}{\pi}\arctan\left(\sqrt{\frac{\sigma^2h^{2H}}{\Sigma_{RS}\Sigma_S^{-1}\Sigma_{RS}^T}-1}\right).$$
\item We now deal with the conditional case. Thanks to the above Cholesky decomposition and to equation~\eqref{eq:CholUV}, we know that, conditionally to $S$, the variable $R_{t,t+h}$ is Gaussian of mean $\E\{R_{t,t+h}|S\}=\hat R_{t,t+h|\Delta}$ and of variance $b^2$.  Therefore, the conditional hit ratio is, where $V\sim\mathcal N (0,1)$:
$$\begin{array}{ccl}
\rho^c(y) & = & \proba\left[\hat R_{t,t+h|\Delta}\geq 0,\hat R_{t,t+h|\Delta}+bV\geq 0|S=y\right] + \proba\left[\hat R_{t,t+h|\Delta}< 0,\hat R_{t,t+h|\Delta}+bV< 0|S=y\right] \\
 & = & \indic_{\{\Sigma_{RS}\Sigma_S^{-1}y\geq 0\}} \proba\left[V\geq -\Sigma_{RS}\Sigma_S^{-1}y/b\right] + \indic_{\{\Sigma_{RS}\Sigma_S^{-1}y< 0\}} \proba\left[V< -\Sigma_{RS}\Sigma_S^{-1}y/b\right] \\
  & = & \indic_{\{\Sigma_{RS}\Sigma_S^{-1}y\geq 0\}} N\left(\Sigma_{RS}\Sigma_S^{-1}y/b\right) + \indic_{\{\Sigma_{RS}\Sigma_S^{-1}y< 0\}} N\left(-\Sigma_{RS}\Sigma_S^{-1}y/b\right) \\
  & = & N\left(\left|\Sigma_{RS}\Sigma_S^{-1}y\right|/b\right).
\end{array}$$
\end{enumerate}
\end{proof}

\section{Proof of Theorem~\ref{th:ternary}}\label{sec:ternary}

\begin{proof}
Using the same approach as in the proof of Theorem~\ref{th:hitNrdt}, we can write:
\begin{equation}\label{eq:p+}
\begin{array}{ccl}
p^+(\theta) & = & \proba\left[\hat R_{t,t+h|\Delta}\geq \theta,R_{t,t+h}> 0\right] +\proba\left[\hat R_{t,t+h|\Delta}\leq -\theta,R_{t,t+h}< 0\right] \\
 & = & \proba\left[aU\geq\theta,aU+bV>0\right] + \proba\left[aU\leq-\theta,aU+bV<0\right] \\
 & = & \proba\left[U\geq \theta/a,  V> -(a/b)U\right] + \proba\left[U\leq -\theta/a,  V< -(a/b) U\right] \\
 & = & \int_{\theta/a}^{+\infty}{\int_{-au/b}^{+\infty}{(g(u)g(v))dv}du} + \int_{-\infty}^{-\theta/a}{\int_{-\infty}^{-au/b}{(g(u)g(v))dv}du} \\
 & = & \int_{\theta/a}^{+\infty}{\left(1-N\left(-\frac{a}{b}u\right)\right)g(u)du} + \int_{-\infty}^{-\theta/a}{N\left(-\frac{a}{b}u\right)g(u)du} \\
 & = & 1-N(\theta/a)-\int_{\theta/a}^{+\infty}{N(-\frac{a}{b}u)g(u)du}+\int_{\theta/a}^{+\infty}{N(\frac{a}{b} u)g(u)du},
\end{array}
\end{equation}
by symmetry of $g$, the Gaussian density, with a change of variable in the second integral, and where $U,V\sim\mathcal N (0,1)$ are random variables independent from each other, $a= \sqrt{\Sigma_{RS}\Sigma_S^{-1}\Sigma_{RS}^T}$, and $b=\sqrt{\sigma^2h^{2H} - \Sigma_{RS}\Sigma_S^{-1}\Sigma_{RS}^T}$. Using Lemma~\ref{lem:intBis}, which is in Appendix~\ref{sec:lemma}, we obtain:
$$p^+(\theta)=1-N(\theta/a)+\frac{1}{\pi}\arctan(\frac{a}{b})-\frac{\theta^2}{2\pi ab}+\left(\frac{1}{ab^3}+\frac{3}{a^4b}\right)\frac{\theta^4}{24\pi}+\mathcal O (\theta^6).$$
Similarly, we can write
\begin{equation}\label{eq:p-}
\begin{array}{ccl}
p^-(\theta) & = & \proba\left[aU\leq-\theta,aU+bV>0\right] + \proba\left[aU\geq\theta,aU+bV<0\right] \\
 & = & \int_{\theta/a}^{+\infty}{N(-\frac{a}{b} u)g(u)du}+\int_{-\infty}^{-\theta/a}{(1-N(-\frac{a}{b} u))g(u)du} \\
 & = & 1-N(\theta/a)+\int_{\theta/a}^{+\infty}{N(-\frac{a}{b} u)g(u)du}-\int_{\theta/a}^{+\infty}{N(\frac{a}{b} u)g(u)du}. 
\end{array}
\end{equation}
Using again Lemma~\ref{lem:intBis}, we get:
$$p^-(\theta)=1-N(\theta/a)-\frac{1}{\pi}\arctan(\frac{a}{b})+\frac{\theta^2}{2\pi ab}-\left(\frac{1}{ab^3}+\frac{3}{a^4b}\right)\frac{\theta^4}{24\pi}+\mathcal O (\theta^6).$$
From equations~\eqref{eq:p+} and~\eqref{eq:p-}, we also obtain:
$$p^0(\theta) = 1- p^+(\theta)-p^-(\theta)=-1+2N\left(\frac{\theta}{a}\right).$$
This proves Theorem~\ref{th:ternary}.
\end{proof}

\section{Proof of Theorem~\ref{th:ternaryFin}}\label{sec:ternaryFin}

\begin{proof}
By definition,
$$\widetilde R_{t,t+h}(\theta)  = \E\left[R_{t,t+h}\left(\indic_{\{\hat R_{t,t+h|\Delta}\geq \theta\}} -\indic_{\{\hat R_{t,t+h|\Delta}\leq -\theta\}}\right)\right].$$
Like in the proof of Theorems~\ref{th:hitNrdt} and~\ref{th:ternary}, we can write it:
$$\widetilde R_{t,t+h}(\theta)  = \E\left[\left(aU+bV\right)\left(\indic_{\{a U\geq \theta\}} -\indic_{\{a U\leq -\theta\}}\right)\right],$$
where $U,V\sim\mathcal N (0,1)$ are two independent Gaussian variables, $a= \sqrt{\Sigma_{RS}\Sigma_S^{-1}\Sigma_{RS}^T}$, and $b=\sqrt{\sigma^2h^{2H} - \Sigma_{RS}\Sigma_S^{-1}\Sigma_{RS}^T}$. Then
$$\begin{array}{ccl}
\widetilde R_{t,t+h}(\theta) & = & \int_{\theta/a}^{+\infty}{\left\{\int_{\mathbb R}{\left(au+bv\right)g(v)g(u)dv}\right\}du} - \int_{-\infty}^{-\theta/a}{\left\{\int_{\mathbb R}{\left(au+bv\right)g(v)g(u)dv}\right\}du} \\
& = & \int_{\theta/a}^{+\infty}{a ug(u)du} - \int_{-\infty}^{-\theta/a}{a ug(u)du} \\
& = & 2ag(\theta/a),
\end{array}$$
using the symmetry of $g$ and the fact that $ug(u)=-g'(u)$. Similarly,
$$\begin{array}{ccl}
\widetilde\sigma_{t,t+h}^-(\theta) & = & -\E\left[R_{t,t+h}\left(\indic_{\{\hat R_{t,t+h|\Delta}\geq \theta,R_{t,t+h}<0\}} -\indic_{\{\hat R_{t,t+h|\Delta}\leq -\theta,R_{t,t+h}>0\}}\right)\right] \\
 & = & -\E\left[\left(aU+bV\right)\left(\indic_{\{a U\geq \theta,aU+bV<0\}} -\indic_{\{aU\leq -\theta,aU+bV>0\}}\right)\right] \\
 & = & -\int_{\theta/a}^{+\infty}{\left\{\int_{-\infty}^{-au/b}{\left(au+bv\right)g(v)g(u)dv}\right\}du} \\
 & & + \int_{-\infty}^{-\theta/a}{\left\{\int_{-au/b}^{+\infty}{\left(au+bv\right)g(v)g(u)dv}\right\}du} \\
 & = & -\int_{\theta/a}^{+\infty}{\left[au N\left(-\frac{au}{b}\right) - b g\left(-\frac{au}{b}\right)\right]g(u)du} \\
  & & + \int_{-\infty}^{-\theta/a}{\left[au \left(1-N\left(-\frac{au}{b}\right)\right) + b g\left(-\frac{au}{b}\right)\right]g(u)du} \\
  & = & -2a\int_{\theta/a}^{+\infty}{u N\left(-\frac{au}{b}\right)g(u)du} + 2b\int_{\theta/a}^{+\infty}{ g\left(\frac{au}{b}\right)g(u)du},
\end{array}$$
after a change of variable $u\mapsto -u$ in the last line. This leads to the general formula, where we use Lemma~\ref{lem:intTer} (which is postponed in Appendix~\ref{sec:lemma}) in the first integral, the fact that $g(au/b)g(u)=g(u\sqrt{1+(a/b)^2})/\sqrt{2\pi}$ and a change of variable $u\mapsto u\sqrt{1+(a/b)^2}$ in the second integral:
$$\begin{array}{ccl}
\widetilde\sigma_{t,t+h}^-(\theta)  & = & - 2a \left[N\left(-\frac{\theta}{b}\right)g\left(\frac{\theta}{a}\right)-\frac{a}{\sqrt{2\pi(a^2+b^2)}}\left(1-N\left(\theta\sqrt{\frac{1}{a^2}+\frac{1}{b^2}}\right)\right)\right]  \\
 & & +  \frac{2b}{\sqrt{2\pi}\sqrt{1+(a/b)^2}}\left(1-N\left(\theta\sqrt{\frac{1}{a^2}+\frac{1}{b^2}}\right)\right) \\
  & = & - 2aN\left(-\frac{\theta}{b}\right)g\left(\frac{\theta}{a}\right) + \frac{2a^2+2b^2}{\sqrt{2\pi(a^2+b^2)}}\left(1-N\left(\theta\sqrt{\frac{1}{a^2}+\frac{1}{b^2}}\right)\right) \\
  & = & - 2aN\left(-\frac{\theta}{b}\right)g\left(\frac{\theta}{a}\right) + \sqrt{\frac{2(a^2+b^2)}{\pi}}N\left(-\theta\sqrt{\frac{1}{a^2}+\frac{1}{b^2}}\right).
\end{array}$$
By noting that $a^2+b^2=\sigma^2 h^{2H}$, we obtain the results displayed in Theorem~\ref{th:ternaryFin}.
\end{proof}

\end{document}